\documentclass[11pt,reqno]{article}

\usepackage{amssymb,amscd,amsmath,amsthm}
\numberwithin{equation}{section}
\usepackage{graphicx}
\usepackage{tikz}
\usetikzlibrary{arrows.meta,calc,decorations.pathmorphing,positioning}
\newtheorem{theorem}{Theorem}[section]
\newtheorem{proposition}[theorem]{Proposition}

\newtheorem{corollary}[theorem]{Corollary}
\newtheorem{definition}[theorem]{Definition}
\newtheorem{remark}[theorem]{Remark}

\def\cb{{\mathcal B}}

\def\ce{{\mathcal E}}

\def\cw{{\mathcal W}}

\def\bh{{\mathbb H}}

\def\tr{{\rm Tr}}
\def\L{\Lambda}
\def\G{\Gamma}
\def\ce{\mathcal E}

\def\ffi{\varphi}

\def\Tr{\mathrm{Tr}}
\def\<{\langle}
\def\>{\rangle}

\def\1{\mathbf{1}}

\def\cw{\cal W}

\def\cal{\mathcal}

\def\bh{\mathbf{h}}

\def\id{{\bf 1}\!\!{\rm I}}

\begin{document}

\begin{center}
{\Large {\bf Quantum Markov Chains for an Asymmetric Mixed Ising–XY Model on a Cayley Tree}}\\[1cm]
{\large   {\sc Farrukh Mukhamedov$^{1,a}$}}\\[2mm]
\end{center}
$^1$ Department of Mathematical Sciences, College of Science, \\
United Arab Emirates University,\\
 P.O. Box 15551, Al Ain,  Abu Dhabi, UAE\\[0.4cm]
{\small
$^a$ e-mail: {\tt  far75m@yandex.ru; farrukh.m@uaeu.ac.ae}}

\begin{abstract}
We study a mixed quantum Ising--$XY$ model on the semi-infinite rooted Cayley tree of order two. For every vertex $u$, the edge $\langle u,(u,1)\rangle$ carries an $XY$ interaction and the edge $\langle u,(u,2)\rangle$ carries an Ising interaction. Using the compatibility criterion for tree-indexed quantum Markov chains and consistently working with the normalized trace, we derive the translation-invariant boundary equation and compute explicitly the associated local transfer operator, namely the one-step partial-trace map which propagates successor boundary data to the parent vertex. We prove that the boundary equation has a unique positive translation-invariant solution for all $J_I,J_{XY}\in\mathbb R$ and $\beta>0$. Hence the model admits a unique translation-invariant quantum Markov chain generated by a positive translation-invariant boundary condition. We also show that the reduced boundary-law dynamics, i.e. the induced finite-dimensional recursion for the boundary-law parameters, has no admissible periodic points of period greater than one and compute the local two-site entanglement on the natural three-site cluster of the tree.\\
\textit{Key Words}:  Mixed Ising–XY Model; Cayley Tree;  entanglement; quantum Markov chain.
\end{abstract}

\section{Introduction }\label{intr}

Quantum Markov chains and quantum Markov states provide a natural non-commutative extension of classical Markov fields and Gibbs measures. They have been studied on one-dimensional spin chains, on $\mathbb Z^d$-lattices, and on graph-indexed systems; see, for example, \cite{[Ac74f],[AcFi01a],AF03,[AcFiMu07],AOM,fannes2,G,ILW,Kum,SA26}. Among non-amenable graphs, Cayley trees occupy a special place. Their boundary remains comparable with the volume in the thermodynamic limit, while their recursive structure often allows exact calculations that are inaccessible on Euclidean lattices \cite{AMSa,AMSa2,S23,SM24}. This makes tree-indexed quantum Markov chains a convenient setting for studying how boundary laws and local transfer operators determine infinite-volume states.

Throughout the paper, a \emph{boundary law} means the family of positive single-site operators attached to the vertices and used as boundary weights in the finite-volume states. A \emph{local transfer operator} is the one-vertex map obtained by conjugating with the interaction operator at a parent and then taking the normalized partial trace over the two successors; the boundary-law equation is precisely the fixed-point, or recursive, equation for these local maps. After a boundary operator is written in Pauli (Bloch) coordinates, this operator recursion becomes a finite-dimensional real recursion. We call this induced recursion the \emph{reduced boundary-law dynamics}.

For nearest-neighbor spin models on a tree, the contrast between Ising-type and $XY$-type interactions is already visible at the level of translation-invariant boundary conditions. Classical and quantum Ising models on Cayley trees may exhibit phase coexistence and multiple boundary laws \cite{AMSa3,MBS161,MR04,MS19,MS21,Roz}, whereas for the quantum $XY$ model on the Cayley tree of order two the positive translation-invariant boundary law is unique \cite{MG17,MG19}. This raises a natural question: what happens if one combines diagonal Ising interaction and coherent exchange-type $XY$ interaction within the same tree model?

In the present paper we study precisely this mixed situation. We consider the semi-infinite rooted Cayley tree of order two and assign, for every vertex $u$, an $XY$ interaction to the edge $\langle u,(u,1)\rangle$ and an Ising interaction to the edge $\langle u,(u,2)\rangle$. Related mixed Ising--$XY$ systems on trees and other hierarchical lattices have been investigated in \cite{AAM25,ASH14,BO,MBSG20,ZM17}. Our aim is to construct the corresponding tree-indexed QMC by the compatibility criterion of \cite{AMSOa3}, derive the translation-invariant boundary equation, and analyze all positive translation-invariant solutions.

The main point of the paper is that, the translation-invariant boundary equation admits exactly one positive solution for all $J_I\in\mathbb R$, $J_{XY}\in\mathbb R$, and $\beta>0$. Consequently, the mixed model considered here has a unique translation-invariant QMC associated with a positive translation-invariant boundary condition. We also prove that the reduced dynamical system has no admissible periodic points of period greater than one, compute the local three-site density matrix of the translation-invariant state, and derive explicit pairwise entanglement measures on the natural three-site cluster.

The model studied in \cite{AAM25} is a level-alternating mixed Ising--$XY$ model: the Ising interaction acts on odd levels of the Cayley tree, while the \(XY\) interaction acts on even levels. In contrast, the present paper considers a locally asymmetric oriented model in which, for every vertex \(u\), the edge \(\langle u,(u,1)\rangle\) carries the \(XY\) interaction and the edge \(\langle u,(u,2)\rangle\) carries the Ising interaction. Hence the asymmetry is built into the branching geometry itself rather than into level parity.

This geometric difference leads to a different boundary-law equation and a different recursive structure. While \cite{AAM25} proves uniqueness of the periodic translation-invariant QMC and also constructs non-translation-invariant QMCs, the present paper proves a stronger rigidity result for the oriented model: the positive translation-invariant boundary condition is unique for all values of the parameters, the reduced dynamics has no admissible periodic points of period greater than one, and every positive solution of the full two-child boundary equation is scalar. Consequently, the non-translation-invariant positive scalar boundary laws that remain do not generate new states, but only the same QMC as the unique translation-invariant positive solution.

The paper is organized as follows. Section 2 recalls the basic notation for the Cayley tree and the definition of a backward quantum Markov chain. In Section 3 we construct the finite-volume operators and state the compatibility criterion for the mixed model. Section 4 derives the corrected translation-invariant boundary equation and proves uniqueness of the positive solution. Section 5 discusses the induced dynamical interpretation on normalized Bloch variables. Section 6 computes local two-site entanglement on the natural three-site cluster. Section 7 contains a brief discussion of possible applications. We notice that the considered model opens new insight into the hidden quantum Markov chains on trees \cite{ASLS24,SMG24}.

\section{Preliminaries}

\subsection{Cayley tree}

Let $\Gamma^k_+ = (L,E)$ be a semi-infinite Cayley tree of order
$k\geq 1$ with the root $x^0$ (i.e. each vertex of $\Gamma^k_+$ has
exactly $k+1$ edges, except for the root $x^0$, which has $k$
edges). Here $L$ is the set of vertices and $E$ is the set of edges.
The vertices $x$ and $y$ are called {\it nearest neighbors} and they
are denoted by $l=<x,y>$ if there exists an edge connecting them. A
collection of the pairs $<x,x_1>,\dots,<x_{d-1},y>$ is called a {\it
path} from the point $x$ to the point $y$. The distance $d(x,y),
x,y\in V$, on the Cayley tree, is the length of the shortest path
from $x$ to $y$.

Recall a coordinate structure in $\G^k_+$:  every vertex $x$ (except
for $x^0$) of $\G^k_+$ has coordinates $(i_1,\dots,i_n)$, here
$i_m\in\{1,\dots,k\}$, $1\leq m\leq n$ and for the vertex $x^0$ we
put $(0)$.  Namely, the symbol $(0)$ constitutes level 0, and the
sites $(i_1,\dots,i_n)$ form level $n$ (i.e. $d(x^0,x)=n$) of the
lattice (see Fig. 1).

Let us set
\[
W_n = \{ x\in L \, : \, d(x,x_0) = n\} , \qquad \Lambda_n =
\bigcup_{k=0}^n W_k, \qquad  \L_{[n,m]}=\bigcup_{k=n}^mW_k, \ (n<m)
\]
\[
E_n = \big\{ <x,y> \in E \, : \, x,y \in \Lambda_n\big\}, \qquad
\Lambda_n^c = \bigcup_{k=n}^\infty W_k
\]
For $x\in \G^k_+$, $x=(i_1,\dots,i_n)$ denote
$$ S(x)=\{(x,i):\ 1\leq
i\leq k\}.
$$
Here $(x,i)$ means that $(i_1,\dots,i_n,i)$. This set is called a
set of {\it direct successors} of $x$.


\subsection{Quantum Markov Chains}

 The algebra of observables $\cb_x$ for any single site
$x\in L$ will be taken as the algebra $M_d$ of the complex
$d\times d$ matrices. The algebra of observables localized in the
finite volume $\L\subset L$ is then given by
$\cb_\L=\bigotimes\limits_{x\in\L}\cb_x$. As usual if
$\L^1\subset\L^2\subset L$, then $\cb_{\L^1}$ is identified as a
subalgebra of $\cb_{\L^2}$ by tensoring with unit matrices on the
sites $x\in\L^2\setminus\L^1$. Note that, in the sequel, by
$\cb_{\L,+}$ we denote the positive part of $\cb_\L$. The full
algebra $\cb_L$ of the tree is obtained in the usual manner by an
inductive limit
$$
\cb_L=\overline{\bigcup\limits_{\L_n}\cb_{\L_n}}.
$$
In what follows, by ${\cal S}({\cal B}_\L)$ we will denote the set
of all states defined on the algebra ${\cal B}_\L$.

Consider a triplet ${\cal C} \subset {\cal B} \subset {\cal A}$ of
unital $C^*$-algebras. Recall \cite{[Ac74f]} that a {\it
quasi-conditional expectation} with respect to the given triplet
is a completely positive (CP) linear map $\ce \,:\, {\cal A} \to
{\cal B}$ such that $ \ce(ca) = c \ce(a)$, for all $a\in {\cal A},\, c \in {\cal C}$.

\begin{definition}[\cite{AOM}]\label{QMCdef}
A state $\varphi$ on ${\cal B}_L$ is called a {\it (backward) quantum
Markov chain (QMC)}, associated to $\{\L_n\}$, if for each
$\Lambda_n$, there exist a quasi-conditional expectation
$\ce_{\Lambda_n}$ with respect to the triplet
\begin{equation}\label{trplt1}
{\cal B}_{{\Lambda}_{n-1}}\subseteq {\cal
B}_{\Lambda_n}\subseteq{\cal B}_{\Lambda_{n+1}}
\end{equation}
and an initial state $\rho \in S(B_{\Lambda_0})$ such that:
\begin{equation}\label{dfgqmf}
\varphi = \lim_{n\to\infty} \rho_0 \circ
\ce_{\Lambda_0}\circ \ce_{\Lambda_{1}} \circ \cdots \circ
\ce_{\Lambda_n}
\end{equation}
in the weak-* topology.
\end{definition}

\begin{remark}
Notice that in \cite{[AcFiMu07],AMSOa3} a more general definition of backward QMC is given on
arbitrary quasi-local algebras.
\end{remark}

\section{Constructions of QMCs associated with the mixed Ising-$XY$ model on a Cayley tree of order two}

In this section we recall the standard construction of a quantum Markov chain associated with the mixed Ising-$XY$ model on the semi-infinite Cayley tree $\Gamma_{2}^{+}=(L,E)$ of order two.

Let us rewrite the elements of $W_n$ in the lexicographic order with respect to the coordinate system, namely
\begin{eqnarray*}
\overrightarrow{W_n}:=\left(x^{(1)}_{W_n},x^{(2)}_{W_n},\cdots,x^{(|W_n|)}_{W_n}\right).
\end{eqnarray*}
Note that $|W_n|=2^n$. In what follows, by $\prod$ we mean an ordered product, i.e.
$$
\prod_{k=1}^n a_k=a_1a_2\cdots a_n,
$$
where the factors are multiplied in the indicated order.

In the sequel, we consider the quasi-local algebra $B_{L}$ with single-site algebra $B_x=M_{2}(\mathbb{C})$ for every $x\in L$. By $\sigma_{x}^{(u)}$, $\sigma_{y}^{(u)}$, $\sigma_{z}^{(u)}$ one denotes the Pauli spin operators at the site $u\in L$, i.e.
$$
\id^{(u)}=\begin{pmatrix}
1 & 0\\
0& 1
\end{pmatrix},\ \sigma_{x}^{(u)}=\begin{pmatrix}
0 & 1\\
1& 0
\end{pmatrix},\ \sigma_{y}^{(u)}=\begin{pmatrix}
0 & -i\\
i& 0
\end{pmatrix},\ \sigma_{z}^{(u)}=\begin{pmatrix}
1 & 0\\
0& -1
\end{pmatrix}.
$$

For every vertex $u\in L$ we assign the operators
\begin{equation}\label{Kxy}
K_{\langle u,(u,1)\rangle}^{XY}=\exp \left\{J_{XY} \beta H_{\langle u,(u,1)\rangle}^{XY}\right\}, \qquad K_{\langle u,(u,2)\rangle}^{I}=\exp \left\{J_I \beta H_{\langle u,(u,2)\rangle}^{I}\right\},
\end{equation}
where $J_I,J_{XY}\in\mathbb{R}$ and $\beta>0$. Here the corresponding Hamiltonians are defined by
\begin{equation}
\begin{aligned}
& H^{XY}_{\langle u,(u,1)\rangle}=\frac{1}{2}\left(\sigma_x^{(u)} \otimes \sigma_x^{(u,1)}+\sigma_y^{(u)} \otimes \sigma_y^{(u,1)}\right), \\
& H^I_{\langle u,(u,2)\rangle}=\frac{1}{2}\left(\id^{(u)} \otimes \id^{(u,2)}+\sigma_z^{(u)} \otimes \sigma_z^{(u,2)}\right).
\end{aligned}
\end{equation}
Thus the model is mixed in the sense that, at each vertex $u$, one outgoing edge carries an $XY$ interaction and the other outgoing edge carries an Ising interaction.

Now we define a state on $\mathcal{B}_{\Lambda_n}$ with boundary conditions $\omega_{0}\in\mathcal{B}_{(0),+}$ and $\{h_{x}\in\mathcal{B}_{x,+}:\ x\in L\}$.

\begin{figure}[t]
\centering
\begin{tikzpicture}[
  scale=0.92,
  every node/.style={font=\small},
  site/.style={circle,draw,thick,minimum size=6.5mm,inner sep=1pt,fill=white},
  xyedge/.style={thick,decorate,decoration={snake,amplitude=.45mm,segment length=2.2mm}},
  iedge/.style={thick,densely dashed},
  edgelabel/.style={font=\scriptsize,fill=white,inner sep=1pt},
  levellabel/.style={font=\scriptsize}
]
\node[site] (v0) at (0,0) {$0$};
\node[site] (v1) at (-2.8,-1.35) {$1$};
\node[site] (v2) at (2.8,-1.35) {$2$};
\node[site] (v11) at (-4.2,-2.8) {$11$};
\node[site] (v12) at (-1.4,-2.8) {$12$};
\node[site] (v21) at (1.4,-2.8) {$21$};
\node[site] (v22) at (4.2,-2.8) {$22$};

\draw[xyedge] (v0) -- node[edgelabel,above left] {$XY$} (v1);
\draw[iedge] (v0) -- node[edgelabel,above right] {Ising} (v2);
\draw[xyedge] (v1) -- node[edgelabel,above left] {$XY$} (v11);
\draw[iedge] (v1) -- node[edgelabel,above right] {Ising} (v12);
\draw[xyedge] (v2) -- node[edgelabel,above left] {$XY$} (v21);
\draw[iedge] (v2) -- node[edgelabel,above right] {Ising} (v22);

\node[levellabel,anchor=east] at (-5.05,0) {$W_0$};
\node[levellabel,anchor=east] at (-5.05,-1.35) {$W_1$};
\node[levellabel,anchor=east] at (-5.05,-2.8) {$W_2$};

\draw[xyedge] (-4.65,-3.55) -- (-3.85,-3.55);
\node[anchor=west,font=\scriptsize] at (-3.75,-3.55)
  {$XY$ edge: $H_{\langle x,(x,1)\rangle}^{XY}$};
\draw[iedge] (0.35,-3.55) -- (1.15,-3.55);
\node[anchor=west,font=\scriptsize] at (1.25,-3.55)
  {Ising edge: $H_{\langle x,(x,2)\rangle}^{I}$};
\node[font=\scriptsize,align=center,text width=11.4cm] at (0,-4.12)
  {At every vertex $x$, the first successor $(x,1)$ carries the $XY$ interaction, while the second successor $(x,2)$ carries the Ising interaction.};
\end{tikzpicture}
\caption{The first levels of $\G_+^2$ and the locally asymmetric mixed Ising--$XY$ assignment of interactions.}
\label{fig1}
\end{figure}

For each $x\in L$ let (see Fig \ref{fig:local-interaction-block})
\begin{equation}\label{K1}
A_{x,(x,1),(x,2)}=K_{\langle x,(x,1)\rangle}^{XY}K_{\langle x,(x,2)\rangle}^{I}.
\end{equation}

\begin{figure}[t]
\centering
\begin{tikzpicture}[
  every node/.style={font=\small},
  site/.style={circle,draw,thick,minimum size=8mm,inner sep=1pt,fill=white},
  xyedge/.style={thick,decorate,decoration={snake,amplitude=.45mm,segment length=2.2mm}},
  iedge/.style={thick,densely dashed},
  edgelabel/.style={font=\scriptsize,fill=white,inner sep=1pt},
  box/.style={draw,rounded corners,thick,align=center,inner sep=4pt}
]
\node[site] (x) at (0,0) {$x$};
\node[site] (x1) at (-2.25,-1.5) {$(x,1)$};
\node[site] (x2) at (2.25,-1.5) {$(x,2)$};

\draw[xyedge] (x) -- node[edgelabel,above left] {$K_{\langle x,(x,1)\rangle}^{XY}$} (x1);
\draw[iedge] (x) -- node[edgelabel,above right] {$K_{\langle x,(x,2)\rangle}^{I}$} (x2);

\node[box] at (0,-2.95)
  {$A_{x,(x,1),(x,2)}=K_{\langle x,(x,1)\rangle}^{XY}K_{\langle x,(x,2)\rangle}^{I}$};
\node[align=center,font=\scriptsize] at (-2.35,-2.2)
  {$H^{XY}=\frac12(\sigma_x\otimes\sigma_x+\sigma_y\otimes\sigma_y)$};
\node[align=center,font=\scriptsize] at (2.35,-2.2)
  {$H^I=\frac12(\id\otimes\id+\sigma_z\otimes\sigma_z)$};
\end{tikzpicture}
\caption{The local three-site interaction block at a vertex $x$. The left child edge is the coherent $XY$ exchange edge, while the right child edge is the diagonal Ising edge.}
\label{fig:local-interaction-block}
\end{figure}

Using these operators, we put
\begin{eqnarray}
\label{K11} &&K_{[m,m+1]}:=\prod_{x\in \overrightarrow{W_{m}}}A_{x,(x,1),(x,2)}, \qquad 0\leq m\leq n-1,\\[2mm]
\label{K2} &&
\bh_{n}^{1/2}:=\prod_{x\in\overrightarrow{W_n}}(h_{x})^{1/2}, \qquad \bh_n=\bh_{n}^{1/2}(\bh_{n}^{1/2})^*,\\[2mm]
\label{K3}
&&{\mathbf{K}}_n:=\omega_{0}^{1/2}\prod_{m=0}^{n-1}K_{[m,m+1]}\bh_{n}^{1/2},\\[2mm]
\label{K4} &&
\mathcal{W}_{n]}:={\mathbf{K}}_n^{*}{\mathbf{K}}_n.
\end{eqnarray}

One can see that $\mathcal{W}_{n]}$ is positive. In what follows, by $\tr_{\Lambda}:\cb_L\to\cb_{\Lambda}$ we mean the normalized partial trace (i.e. $\tr_{\Lambda}(\id_{L})=\id_{\Lambda}$ for every finite $\Lambda\subset L$). For brevity we write $\tr_{n]}:=\tr_{\Lambda_n}$. We also denote by $\Tr$ the normalized trace on the single-site algebra $M_2(\mathbb C)$, so that $\Tr(\id)=1$. For $x\in L$, we also denote by $\tr_{S(x)}$ the normalized partial trace over the tensor factors corresponding to the direct successors of $x$.

For later reference we call
\[
\mathcal E_x(Y):=\tr_{S(x)}\bigl(A_{x,(x,1),(x,2)}Y A_{x,(x,1),(x,2)}^*\bigr),
\qquad
Y\in\mathcal B_x\otimes\mathcal B_{(x,1)}\otimes\mathcal B_{(x,2)},
\]
the local transfer operator at the vertex $x$. Thus the compatibility relation for the boundary law is
\[
h_x=\mathcal E_x\bigl(\id^{(x)}\otimes h_{(x,1)}\otimes h_{(x,2)}\bigr),
\]
which is exactly the local equation appearing in \eqref{eq2}.

\begin{figure}[t]
\centering
\begin{tikzpicture}[
  node distance=8mm,
  every node/.style={font=\small},
  box/.style={draw,rounded corners,thick,align=center,minimum height=8mm,text width=2.85cm,inner sep=3pt},
  arr/.style={-{Latex[length=2mm]},thick}
]
\node[box] (input) {$\id^{(x)}\otimes h_{(x,1)}\otimes h_{(x,2)}$};
\node[box,right=of input] (conj) {conjugate by\\[-1mm]$A_{x,(x,1),(x,2)}$};
\node[box,right=of conj] (trace) {normalized partial trace\\[-1mm]over $S(x)$};
\node[box,right=of trace] (output) {$h_x$};

\draw[arr] (input) -- (conj);
\draw[arr] (conj) -- node[above,font=\scriptsize] {$A\,\cdot\,A^*$} (trace);
\draw[arr] (trace) -- node[above,font=\scriptsize] {$\tr_{S(x)}$} (output);

\node[align=center,font=\scriptsize] at ($(input)!0.5!(output)+(0,-1.25)$)
  {$h_x=\mathcal E_x\bigl(\id^{(x)}\otimes h_{(x,1)}\otimes h_{(x,2)}\bigr)$};
\end{tikzpicture}
\caption{The local transfer operator as a one-step boundary-law update: successor boundary data are inserted at the two children, conjugated by the local interaction block, and traced over the successors to produce the parent boundary operator.}
\label{fig:local-transfer-diagram}
\end{figure}

Define the positive functional $\ffi^{(n)}_{\omega_0,\bh}$ by
\begin{eqnarray}\label{ffi-ff}
\ffi^{(n)}_{\omega_0,\bh}(a)=\tr(\cw_{n+1]}(a\otimes\id_{W_{n+1}})),
\end{eqnarray}
for every $a\in \cb_{\Lambda_n}$.

To get an infinite-volume state $\ffi$ on $\cb_L$ such that
$\ffi\lceil_{\cb_{\Lambda_n}}=\ffi^{(n)}_{\omega_0,\bh}$, we impose compatibility conditions on the boundary data $\{\omega_0,\bh\}$, namely
\begin{eqnarray}\label{compatibility}
\ffi^{(n+1)}_{\omega_0,\bh}\lceil_{\cb_{\Lambda_n}}=\ffi^{(n)}_{\omega_0,\bh}.
\end{eqnarray}

\begin{theorem}\label{compa}\cite{AMSOa3}
Assume that for every $x\in L$ the operators $K_{\langle x,(x,1)\rangle}^{XY}$ and $K_{\langle x,(x,2)\rangle}^{I}$ are given as above. Let the boundary conditions $\omega_{0}\in {\cal B}_{(0),+}$ and ${\bh}=\{h_x\in {\cal B}_{x,+}\}_{x\in L}$ satisfy (see Fig. \ref{fig:local-transfer-diagram})
\begin{eqnarray}\label{eq1}
&&\Tr(\omega_{0}h_{0})=1, \\
\label{eq2} &&\tr_{S(x)}\big({A_{x,(x,1),(x,2)}}(\id\otimes h_{(x,1)}\otimes h_{(x,2)})A_{x,(x,1),(x,2)}^{*}\big)=h_x, \qquad \textrm{for every } x\in L.
\end{eqnarray}
Then the functionals $\{\ffi^{(n)}_{\omega_0,\bh}\}$ satisfy the compatibility condition \eqref{compatibility}. Moreover, there is a unique quantum Markov chain $\ffi_{\omega_0,{\bh}}$ on $\cb_L$ such that
$$
\ffi_{\omega_0,{\bh}}=w-\lim\limits_{n\to\infty}\ffi^{(n)}_{\omega_0,\bh}.
$$
\end{theorem}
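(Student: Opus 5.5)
The compatibility condition \eqref{compatibility} will follow from a telescoping identity for the partial trace of $\cw_{n+2]}$ over the last level. The quasi-conditional expectations and the weak-* limit then come out of the same recursive structure.

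\textbf{Step 1: localization of the boundary term.} We first note a commutation fact. Since $K_{[m,m+1]}$ acts only on $\Lambda_{[m,m+1]}$, the operators attached to different levels interact only through shared sites. Hence in
$$\cw_{n+2]}=\bh_{n+1}^{1/2}K_{[n,n+1]}^*\cdots\omega_0^{1/2}\,\omega_0^{1/2}\cdots K_{[n,n+1]}\bh_{n+1}^{1/2}$$
the only factors touching $W_{n+1}$ are $K_{[n,n+1]}$, its adjoint and $\bh_{n+1}$. It is more convenient to compute at volume $n+1$ and compare with volume $n$. Take $a\in\cb_{\Lambda_n}$; then $a\otimes\id_{W_{n+1}}\otimes\id_{W_{n+2}}$ commutes with every factor living on $W_{n+1}\cup W_{n+2}$. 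Writing $\mathbf K_{n+2}=\mathbf K'_{n+1}K_{[n+1,n+2]}\bh_{n+2}^{1/2}$ and using cyclicity of the trace on the $W_{n+2}$-part, we get
$$\ffi^{(n+1)}(a)=\tr\Bigl(\mathbf K'^{*}_{n+1}\,\cdots\,\tr_{n+1]}\bigl(K_{[n+1,n+2]}\bh_{n+2}K_{[n+1,n+2]}^*\bigr)\cdots\Bigr).$$
The exact placement of $a$ has to be tracked: here one uses that $a$ sits on $\Lambda_n$ and commutes with $K_{[n+1,n+2]}$.

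\textbf{Step 2: factorization over vertices.} Next we evaluate the inner partial trace. $K_{[n+1,n+2]}$ is an ordered product of the blocks $A_{x,(x,1),(x,2)}$, $x\in W_{n+1}$, and these blocks act on disjoint sets $\{x\}\cup S(x)$, so they commute. The normalized partial trace over $W_{n+2}$ factorizes as the tensor product of the $\tr_{S(x)}$. By \eqref{eq2} we therefore obtain
$$\tr_{n+1]}\bigl(K_{[n+1,n+2]}\bh_{n+2}K^*_{[n+1,n+2]}\bigr)=\bigotimes_{x\in W_{n+1}}h_x=\bh_{n+1}.$$
Substituting this reproduces $\cw_{n+1]}$, which gives $\ffi^{(n+1)}(a)=\ffi^{(n)}(a)$.

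\textbf{Step 3: normalization, quasi-conditional expectations and the limit.} Iterating Step 2 down to the root gives $\ffi^{(n)}(\id)=\Tr(\omega_0h_0)=1$ by \eqref{eq1}. Positivity of each $\ffi^{(n)}$ is clear, so each $\ffi^{(n)}$ is a state. Compatibility then defines a state on $\bigcup_n\cb_{\Lambda_n}$ of norm one, which extends uniquely by density to $\cb_L$, and this extension is the weak-* limit.

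To exhibit the QMC form of Definition \ref{QMCdef}, set
$$\ce_{\Lambda_n}(X)=\tr_{n]}\bigl(K_{[n,n+1]}\bh_{n+1}^{1/2}X\bh_{n+1}^{1/2}K_{[n,n+1]}^*\bigr),$$
suitably symmetrized with $\bh_n^{-1/2}$ when $h_x$ is invertible, as in \cite{AMSOa3}. This map is completely positive, being a composition of a conjugation and a partial trace. It is a $\cb_{\Lambda_{n-1}}$-module map because $K_{[n,n+1]}$ does not touch $\Lambda_{n-1}$.

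\textbf{Main obstacle.} The delicate part is the noncommutative bookkeeping in Step 1. One must justify moving $a$ past the level-$(n+1)$ blocks, which is possible because the support of $a$ is disjoint from $W_{n+1}\cup W_{n+2}$. One must also check that the ordered products, which do not commute across adjacent levels, nevertheless allow the partial trace to act only on the outermost conjugation.
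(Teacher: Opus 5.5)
Your Steps 1--2 are correct and are the standard argument for this result; the paper gives no proof of its own and only cites \cite{AMSOa3}. For $a\in\cb_{\Lambda_n}$ you commute $a$ past $K_{[n+1,n+2]}$ and $\bh_{n+2}^{1/2}$. You then use cyclicity of $\tr$ and the $\cb_{\Lambda_{n+1}}$-bimodule property of $\tr_{n+1]}$, and factorize $\tr_{n+1]}\bigl(K_{[n+1,n+2]}\bh_{n+2}K_{[n+1,n+2]}^*\bigr)=\bigotimes_{x\in W_{n+1}}h_x=\bh_{n+1}$ by \eqref{eq2}. Commuting $a$ with $\bh_{n+1}$ then recovers $\tr(\cw_{n+1]}a)$. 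The indices in the first display of Step 1 are shifted: the right-hand side you wrote is $\cw_{n+1]}$, not $\cw_{n+2]}$. Normalization via \eqref{eq1}, positivity, and the unique extension of the limit functional are fine. However, this bookkeeping is routine; the genuinely delicate point is the one you pass over in Step 3.

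That point is the QMC representation demanded by Definition \ref{QMCdef}, and as you wrote it, it fails. The map $\ce_{\Lambda_n}(X)=\tr_{n]}\bigl(K_{[n,n+1]}\bh_{n+1}^{1/2}X\bh_{n+1}^{1/2}K_{[n,n+1]}^*\bigr)$ is CP and $\cb_{\Lambda_{n-1}}$-modular. But by \eqref{eq2} it satisfies $\ce_{\Lambda_n}(\id)=\bh_n$, so every intermediate boundary operator is inserted again in the composition: $\ce_{\Lambda_{n-1}}\circ\ce_{\Lambda_n}(\id)=\tr_{n-1]}\bigl(K_{[n-1,n]}\bh_n^2K_{[n-1,n]}^*\bigr)$. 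For the paper's solution $h_x\equiv a^{-1}\id$ this gives $\ce_{\Lambda_0}\circ\cdots\circ\ce_{\Lambda_n}(\id)=a^{-(2^{n+1}-1)}\id$. No fixed initial state can turn this into $\varphi$ unless $a=1$. The correct choice is
\[
\ce_{\Lambda_n}(X)=\tr_{n]}\bigl(\bh_n^{-1/2}K_{[n,n+1]}\bh_{n+1}^{1/2}X\bh_{n+1}^{1/2}K_{[n,n+1]}^*\bh_n^{-1/2}\bigr),\qquad \rho_0(y)=\mathrm{Tr}\bigl(h_0^{1/2}\omega_0h_0^{1/2}y\bigr).
\]
This map is still CP and $\cb_{\Lambda_{n-1}}$-modular, because $\bh_n^{-1/2}$ lives on $W_n$. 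You then have to verify the telescoping identity $\bh_n^{1/2}\ce_{\Lambda_n}(X)\bh_n^{1/2}=\tr_{n]}\bigl(K_{[n,n+1]}\bh_{n+1}^{1/2}X\bh_{n+1}^{1/2}K_{[n,n+1]}^*\bigr)$, which follows from bimodularity of $\tr_{n]}$ over $\cb_{\Lambda_n}$. Iterating it gives $\rho_0\circ\ce_{\Lambda_0}\circ\cdots\circ\ce_{\Lambda_n}(X)=\tr(\cw_{n+1]}X)$, which equals $\varphi(X)$ for $X\in\cb_{\Lambda_n}$; $\rho_0$ is a state by \eqref{eq1}.

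Finally, the theorem assumes only $h_x\ge0$, so your restriction ``when $h_x$ is invertible'' leaves a case open. It is closed by replacing $\bh_n^{-1/2}$ with the Moore--Penrose inverse of $\bh_n^{1/2}$. For $X\ge0$ one has $0\le\tr_{n]}\bigl(K_{[n,n+1]}\bh_{n+1}^{1/2}X\bh_{n+1}^{1/2}K_{[n,n+1]}^*\bigr)\le\|X\|\bh_n$. Hence these operators are supported in the support of $\bh_n$, and the telescoping survives.
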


\section{Translation-invariant boundary equation}

In this section we solve the translation-invariant version of \eqref{eq2}. We therefore assume that $h_x=\mathbf h$ for every $x\in L$. Since $\Tr$ is the normalized one-site trace, we write $\mathbf h$ in the Pauli basis as
\begin{equation}\label{bloch}
\mathbf h=X\id+x\sigma_x+y\sigma_y+z\sigma_z
\end{equation}
so that
\[
X=\Tr(\mathbf h),\qquad x=\Tr(\mathbf h\sigma_x),\qquad y=\Tr(\mathbf h\sigma_y),\qquad z=\Tr(\mathbf h\sigma_z).
\]
Since $\mathbf h\ge 0$, one has
\begin{equation}\label{positivity}
X>0, \qquad x^2+y^2+z^2\le X^2.
\end{equation}

Let us first calculate the operators $K_{\langle u,(u,1)\rangle}^{XY}$ and $K_{\langle u,(u,2)\rangle}^{I}$. We obtain
\begin{equation}\label{KI}
K_{\langle u,(u,2)\rangle}^{I}=\kappa_0 \id^{(u)} \otimes \id^{(u,2)}+\kappa_1 \sigma_z^{(u)} \otimes \sigma_z^{(u,2)},
\end{equation}
\begin{equation}\label{KXY}
K_{\langle u,(u,1)\rangle}^{XY}=\frac{C+1}{2}\,\id^{(u)}\otimes\id^{(u,1)}-\frac{C-1}{2}\,\sigma_z^{(u)}\otimes\sigma_z^{(u,1)}
+\frac{S}{2}\left(\sigma_x^{(u)}\otimes\sigma_x^{(u,1)}+\sigma_y^{(u)}\otimes\sigma_y^{(u,1)}\right),
\end{equation}
where
\begin{equation}\label{KK}
\kappa_0=\frac{1}{2}(e^{J_I \beta}+1), \qquad \kappa_1=\frac{1}{2}(e^{J_I \beta}-1),
\end{equation}
\begin{equation}\label{SC}
S=\sinh (J_{XY}\beta), \qquad C=\cosh (J_{XY}\beta).
\end{equation}

Multiplying the operators in the order prescribed by \eqref{K1}, one gets
\begin{equation}\label{AI}
\begin{aligned}
A_{u,(u,1),(u,2)}&=\frac{C+1}{2}\kappa_0\,\id^{(u)} \otimes \id^{(u,1)} \otimes \id^{(u,2)}
+\frac{C+1}{2}\kappa_1\,\sigma_z^{(u)}\otimes\id^{(u,1)} \otimes \sigma_z^{(u,2)}\\[2mm]
&\quad - \frac{C-1}{2}\kappa_1\,\id^{(u)}\otimes\sigma_z^{(u,1)}\otimes \sigma_z^{(u,2)}
- \frac{C-1}{2}\kappa_0\,\sigma_z^{(u)}\otimes\sigma_z^{(u,1)}\otimes \id^{(u,2)}\\[2mm]
&\quad + \frac{S\kappa_0}{2}\left(\sigma_x^{(u)} \otimes \sigma_x^{(u,1)}\otimes \id^{(u,2)}+\sigma_y^{(u)} \otimes \sigma_y^{(u,1)}\otimes \id^{(u,2)}\right)\\[2mm]
&\quad +\frac{S\kappa_1}{2}\left(i\sigma_x^{(u)} \otimes \sigma_y^{(u,1)}\otimes \sigma_z^{(u,2)}-i\sigma_y^{(u)} \otimes \sigma_x^{(u,1)}\otimes \sigma_z^{(u,2)}\right).
\end{aligned}
\end{equation}

For convenience, put
\begin{equation}\label{abcd}
a=C^2(\kappa_0^2+\kappa_1^2), \qquad b=2S^2\kappa_0\kappa_1, \qquad c=S(\kappa_0^2+\kappa_1^2), \qquad d=2\kappa_0\kappa_1.
\end{equation}

\begin{proposition}\label{prop:reduced}
For a translation-invariant boundary condition $h_x=\mathbf h$, the equation \eqref{eq2} is equivalent to
\begin{equation}\label{Phih}
\tr_{S(u)}\bigg( A_{u,(u,1),(u,2)}\left[\id^{(u)} \otimes \mathbf h\otimes\mathbf h\right] A_{u,(u,1),(u,2)}^{*}\bigg)
= (aX^2-bz^2)\id+cXx\,\sigma_x+cXy\,\sigma_y+dXz\,\sigma_z,
\end{equation}
where $X,x,y,z$ are given by \eqref{bloch}. Consequently, \eqref{eq2} reduces to the system
\begin{equation}\label{TH}
\left\{
\begin{aligned}
X&=aX^2-bz^2,\\[1mm]
x&=cXx,\\[1mm]
y&=cXy,\\[1mm]
z&=dXz.
\end{aligned}
\right.
\end{equation}
\end{proposition}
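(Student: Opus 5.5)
Write $A:=A_{u,(u,1),(u,2)}$. The plan is a direct computation of $\tr_{S(u)}\bigl(A(\id\otimes\mathbf h\otimes\mathbf h)A^*\bigr)$ in the Pauli basis, using the expansion \eqref{AI}. The key observation is that $A$ factorizes as $K^{XY}K^{I}$, and the two factors act on the distinct successors $(u,1)$ and $(u,2)$. So instead of squaring the eight-term expression \eqref{AI}, I would trace out the successors one at a time. By \eqref{KI} the Ising factor $K^I$ commutes with $\sigma_z^{(u)}$ and with everything on $(u,1)$; by \eqref{KXY} the factor $K^{XY}$ commutes with everything on $(u,2)$. I would therefore write
\[
A(\id\otimes\mathbf h\otimes\mathbf h)A^*=K^{XY}K^{I}(\id\otimes\mathbf h\otimes\mathbf h)K^{I}K^{XY},
\]
using self-adjointness of both factors (the Hamiltonians are Hermitian and $\beta,J$ are real). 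I would first take $\tr_{(u,2)}$, which only involves $K^I$, and then $\tr_{(u,1)}$.

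\textbf{Step 1 (Ising edge).} I would compute $M:=\tr_{(u,2)}\bigl(K^I(\id\otimes\mathbf h)K^I\bigr)\in\cb_u$. With $K^I=\kappa_0\id\otimes\id+\kappa_1\sigma_z\otimes\sigma_z$, the normalized trace gives
\[
\kappa_0^2\Tr(\mathbf h)+\kappa_1^2\Tr(\sigma_z\mathbf h\sigma_z)+\kappa_0\kappa_1\bigl(\Tr(\mathbf h\sigma_z)+\Tr(\sigma_z\mathbf h)\bigr)\sigma_z .
\]
Hence $M=(\kappa_0^2+\kappa_1^2)X\id+dz\,\sigma_z$. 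However, $\tr_{(u,2)}$ does not commute past $K^{XY}$ acting on $u$: $K^{XY}$ acts on $u\otimes(u,1)$, and $M$ sits on $u$ between two copies of it. So the quantity to compute next is $\tr_{(u,1)}\bigl(K^{XY}(M\otimes\mathbf h)K^{XY}\bigr)$. This holds because $K^{XY}$ does not touch $(u,2)$, so the partial trace over $(u,2)$ passes inside.

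\textbf{Step 2 (XY edge).} I need the map $(P,Q)\mapsto \tr_{(u,1)}\bigl(K^{XY}(P\otimes Q)K^{XY}\bigr)$ for $P\in\{\id,\sigma_z\}$ and $Q\in\{\id,\sigma_x,\sigma_y,\sigma_z\}$. I would use the structure of $K^{XY}$. It is block diagonal in the basis $|{\uparrow\uparrow}\rangle,|{\downarrow\downarrow}\rangle$ (eigenvalue $1$) and the span of $|{\uparrow\downarrow}\rangle,|{\downarrow\uparrow}\rangle$, on which it acts as $\begin{pmatrix}C&S\\S&C\end{pmatrix}$. Working with this $4\times4$ matrix is the cleanest route. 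The expected outcomes are as follows:
\begin{itemize}
\item $(\id,\id)\mapsto C^2\id$ wait—more precisely $\tfrac{1+C^2+S^2}{2}\id$ plus possible $\sigma_z$ terms, which I must check;
\item $(\id,\sigma_{x})\mapsto S\,\sigma_{x}$ (times a scalar), and similarly for $\sigma_y$;
\item $(\sigma_z,\sigma_z)\mapsto$ a multiple of $\id$, which produces the $-bz^2$ term;
\item cross terms such as $(\sigma_z,\sigma_x)$, which must vanish or cancel.
\end{itemize}
Combining with $M$ should give the coefficients $aX^2$, $-bz^2$, $cXx$, $cXy$, $dXz$, matching \eqref{abcd}.

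\textbf{Main obstacle.} The main obstacle is the bookkeeping in Step 2. In particular, I must verify that the identity coefficient comes out as $C^2(\kappa_0^2+\kappa_1^2)X^2$ and not $\tfrac{1+C^2+S^2}{2}(\cdots)$. If there is a discrepancy, I would recheck whether the paper's convention for $\Tr$ and the ordering $K^{XY}K^I$ versus $K^IK^{XY}$ matter. I would also check that products like $X\cdot z$ arising from $(\sigma_z,\id)$ land only in the $\sigma_z$ coefficient, contributing $dXz$. As an independent check, I would redo the computation by expanding \eqref{AI} and using $\tr(\sigma_\alpha\sigma_\beta)=\delta_{\alpha\beta}$ term by term.

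\textbf{Conclusion.} Once \eqref{Phih} is established, the reduction to \eqref{TH} is immediate. Under translation invariance, \eqref{eq2} reads: the left side of \eqref{Phih} equals $\mathbf h=X\id+x\sigma_x+y\sigma_y+z\sigma_z$. Since $\{\id,\sigma_x,\sigma_y,\sigma_z\}$ is a basis of $M_2(\mathbb C)$, comparing coefficients gives \eqref{TH}. Conversely, \eqref{TH} reconstructs \eqref{eq2} for every $x$.
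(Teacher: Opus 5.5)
Your route is correct and genuinely different from the paper's. The paper writes $A=\id\otimes P_0+\sigma_z\otimes P_z+\sigma_x\otimes P_x+\sigma_y\otimes P_y$ in the parent's Pauli basis. It then computes all sixteen coefficients $t_{\mu\nu}=\Tr\bigl(P_\mu(\mathbf h\otimes\mathbf h)P_\nu^*\bigr)$ over both successors at once, using the Levi-Civita formula for $\Tr(\sigma_i\mathbf h\sigma_j)$, and has to check that mixed terms such as $yz$ and $xz$ cancel. You instead use $A=K^{XY}K^{I}$, $A^*=K^{I}K^{XY}$ and the bimodule property of the partial trace to integrate out the Ising child first. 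This gives $M=(\kappa_0^2+\kappa_1^2)X\id+dz\,\sigma_z$ on the parent and leaves a single two-site computation on the $XY$ edge. Your route is shorter and explains the structure of the answer. The $\sigma_x,\sigma_y$ coefficients pair the transverse data of the $XY$ child with the scalar part of the Ising child, while the $\sigma_z$ coefficient pairs the scalar part of the $XY$ child with the $z$-part of the Ising child. It also yields Proposition~\ref{prop:full-boundary-map} at no extra cost if you keep $h_{(x,1)}$ and $h_{(x,2)}$ distinct.

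What your proposal still lacks is the actual table in Step~2, and it contains no hidden discrepancy. Put $\Psi(P\otimes Q):=\tr_{(u,1)}\bigl(K^{XY}(P\otimes Q)K^{XY}\bigr)$. The block form of $K^{XY}$ gives:
\begin{itemize}
\item $\Psi(\id\otimes\id)=\tfrac{1+C^2+S^2}{2}\id=C^2\id$. Your two candidate values coincide because $C^2-S^2=1$, and no $\sigma_z$ term appears.
\item $\Psi(\id\otimes\sigma_x)=S\sigma_x$ and $\Psi(\id\otimes\sigma_y)=S\sigma_y$.
\item $\Psi(\id\otimes\sigma_z)=0$ and $\Psi(\sigma_z\otimes\id)=\sigma_z$.
\item $\Psi(\sigma_z\otimes\sigma_x)=\Psi(\sigma_z\otimes\sigma_y)=0$ and $\Psi(\sigma_z\otimes\sigma_z)=-S^2\id$.
\end{itemize}
Include the entry $\Psi(\id\otimes\sigma_z)=0$, which your list omits. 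It is what guarantees that the $\sigma_z$ coefficient is exactly $dXz$ with no further contribution from the scalar part of $M$. Substituting $M\otimes\mathbf h$ then yields $(aX^2-S^2d\,z^2)\id+cXx\,\sigma_x+cXy\,\sigma_y+dXz\,\sigma_z$. Since $S^2d=b$, this is exactly \eqref{Phih}.

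On your question about ordering: it does matter. With $A=K^IK^{XY}$ the same two-step computation gives $aX^2\id+S(\kappa_0^2-\kappa_1^2)X(x\sigma_x+y\sigma_y)+dC^2Xz\,\sigma_z$. However, the ordering you used is the paper's, so nothing needs rechecking.
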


\begin{proof}
This follows from the detailed computation given in Appendix \ref{app:reducedmap}.
\end{proof}

\begin{theorem}\label{Main}
Let $J_I\in\mathbb R$, $J_{XY}\in\mathbb R$, and $\beta>0$. Then the system \eqref{TH} has a unique positive solution, namely
\begin{equation}\label{hstar}
X_*=\frac{1}{C^2(\kappa_0^2+\kappa_1^2)}, \qquad x_*=y_*=z_*=0,
\end{equation}
or equivalently,
\begin{equation}\label{hstar-matrix}
\mathbf h_*=
\frac{1}{C^2(\kappa_0^2+\kappa_1^2)}\,\id.
\end{equation}
\end{theorem}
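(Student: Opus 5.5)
The plan is a direct case analysis on the reduced system \eqref{TH} from Proposition \ref{prop:reduced}, using only the positivity constraints \eqref{positivity} and elementary inequalities between the constants $a,b,c,d$ from \eqref{abcd}. I will first record the signs of these constants. Since $e^{J_I\beta}>0$, we have $\kappa_0>0$ and $\kappa_0-\kappa_1=1$. Also $\kappa_0^2+\kappa_1^2>0$ and $C\ge 1$, so $a>0$. Finally $|S|<C\le C^2$, and $b\ge 0$ exactly when $\kappa_1\ge 0$.

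\emph{Case $z=0$.} The first equation becomes $X=aX^2$, and since $X>0$ this forces $X=1/a=X_*$. If $x\neq 0$, the second equation gives $cX=1$, that is $c=a$. This means $S(\kappa_0^2+\kappa_1^2)=C^2(\kappa_0^2+\kappa_1^2)$, i.e.\ $S=C^2$, which is impossible because $|S|<C^2$. Hence $x=0$, and by the same argument $y=0$. This case produces exactly the solution \eqref{hstar}, and $\mathbf h_*=X_*\id$ is clearly positive.

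\emph{Case $z\neq 0$.} This is the step I expect to be the only real obstacle, because here one must rule out a genuinely nontrivial candidate. The fourth equation gives $dX=1$, so $d>0$ (i.e.\ $J_I>0$, hence $\kappa_1>0$) and $X=1/d$. Substituting into the first equation gives
\[
b z^2=\frac{a}{d^2}-\frac1d=\frac{a-d}{d^2}.
\]
Next I compare $a-d$ with $b$. Using $C^2=1+S^2$ and $\kappa_0-\kappa_1=1$,
\[
a-d=(\kappa_0-\kappa_1)^2+S^2(\kappa_0^2+\kappa_1^2)=1+S^2(\kappa_0^2+\kappa_1^2)>S^2(\kappa_0^2+\kappa_1^2)\ge 2S^2\kappa_0\kappa_1=b .
\]
If $S=0$, then $b=0<a-d$, so the equation above has no solution at all. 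If $S\neq 0$, then $b>0$ and $z^2=(a-d)/(bd^2)>1/d^2=X^2$. This violates $x^2+y^2+z^2\le X^2$ from \eqref{positivity}, so the case $z\neq 0$ admits no positive solution.

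Combining the two cases, the only solution of \eqref{TH} satisfying \eqref{positivity} is $X_*=1/\bigl(C^2(\kappa_0^2+\kappa_1^2)\bigr)$ with $x_*=y_*=z_*=0$, which is \eqref{hstar-matrix}.
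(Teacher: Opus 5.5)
Your proof is correct and is essentially the paper's argument. For $z\neq 0$ you get $X=1/d$ with $d>0$, rule out $S=0$ via $a-d>0$, and for $S\neq 0$ show $z^2>X^2$; your inequality $a-d>b$ is just the paper's identity $a-d-b=C^2(\kappa_0-\kappa_1)^2=C^2$. For $z=0$ you get $X=1/a$ and $x=y=0$ from $S\neq C^2$. The only blemish is the unused aside that $b\ge 0$ ``exactly when'' $\kappa_1\ge 0$, which fails for $S=0$, since then $b=0$ whatever the sign of $\kappa_1$.
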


\begin{proof}
Assume first that $z\neq 0$. Then the last equation in \eqref{TH} yields
\begin{equation}\label{Xeqd}
X=\frac{1}{d}.
\end{equation}
Since $X=\Tr(\mathbf h)>0$, relation \eqref{Xeqd} is possible only when $d>0$, equivalently $J_I>0$.

If $S=0$, then $b=0$, and the first equation in \eqref{TH} becomes
\[
\frac{1}{d}=a\frac{1}{d^2},
\]
which implies $a=d$. However, by \eqref{abcd} and \eqref{SC},
\[
a-d=\kappa_0^2+\kappa_1^2-2\kappa_0\kappa_1=(\kappa_0-\kappa_1)^2=1>0,
\]
a contradiction. Therefore $z\neq 0$ is impossible when $S=0$.

Assume now that $S\neq 0$. Combining \eqref{TH} with \eqref{Xeqd}, we obtain
\begin{equation}\label{zvalue}
z^2=\frac{a-d}{bd^2}.
\end{equation}
Since $\mathbf h\ge 0$, condition \eqref{positivity} gives $0\le X^2-z^2$. Using \eqref{Xeqd}, \eqref{zvalue}, and \eqref{abcd}, we compute
\begin{align*}
X^2-z^2
&=\frac{1}{d^2}-\frac{a-d}{bd^2}
=\frac{b-a+d}{bd^2}\\[1mm]
&=\frac{2S^2\kappa_0\kappa_1-C^2(\kappa_0^2+\kappa_1^2)+2\kappa_0\kappa_1}{bd^2}
=\frac{2C^2\kappa_0\kappa_1-C^2(\kappa_0^2+\kappa_1^2)}{bd^2}\\[1mm]
&=-\frac{C^2(\kappa_0-\kappa_1)^2}{bd^2}<0,
\end{align*}
which contradicts \eqref{positivity}. Hence necessarily
\begin{equation}\label{zzero}
z=0.
\end{equation}

With \eqref{zzero}, the first equation in \eqref{TH} reduces to $X=aX^2$. Since $X>0$, we obtain
\[
X=\frac{1}{a}=\frac{1}{C^2(\kappa_0^2+\kappa_1^2)}.
\]
Substituting this into the second and third equations of \eqref{TH}, we find
\[
x=\frac{c}{a}x=\frac{S}{C^2}x, \qquad y=\frac{c}{a}y=\frac{S}{C^2}y.
\]
But
\[
C^2-S=1+S^2-S=\left(S-\frac12\right)^2+\frac34>0,
\]
so $S/C^2\neq 1$. Therefore $x=y=0$. This proves \eqref{hstar}--\eqref{hstar-matrix}.
\end{proof}

\begin{corollary}\label{cor:qmc}
Let
\begin{equation}\label{omega0}
\omega_0=C^2(\kappa_0^2+\kappa_1^2)\,\id,
\end{equation}
and let $h_x=\mathbf h_*$ for every $x\in L$, where $\mathbf h_*$ is given by \eqref{hstar-matrix}. Then \eqref{eq1}--\eqref{eq2} hold, and Theorem \ref{compa} yields a translation-invariant quantum Markov chain $\ffi_*$. Moreover, $\ffi_*$ is the unique QMC generated by a positive translation-invariant boundary condition.
\end{corollary}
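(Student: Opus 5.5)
The corollary has three parts: the normalization \eqref{eq1}, the compatibility equation \eqref{eq2}, and uniqueness. I will check them in that order, using only Theorem~\ref{Main}, Proposition~\ref{prop:reduced} and Theorem~\ref{compa}.

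\emph{Step 1: normalization and compatibility.} Write $a=C^2(\kappa_0^2+\kappa_1^2)$ as in \eqref{abcd}, and note $a>0$ because $C\ge 1$ and $\kappa_0>0$. Then $\omega_0=a\id$ and $\mathbf h_*=a^{-1}\id$ are both positive. Since $\Tr$ is the normalized trace, $\Tr(\omega_0\mathbf h_*)=\Tr(\id)=1$, which is \eqref{eq1}. For \eqref{eq2}, Proposition~\ref{prop:reduced} says that with $h_x\equiv\mathbf h$ the equation is equivalent to the system \eqref{TH}. The values $(X,x,y,z)=(1/a,0,0,0)$ solve it, since $aX^2=X$ and the remaining three equations are trivially satisfied. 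So \eqref{eq2} holds at every vertex, and Theorem~\ref{compa} produces the QMC $\ffi_*$.

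\emph{Step 2: translation invariance.} All boundary data $h_x$ coincide. The local blocks $A_{x,(x,1),(x,2)}$ are the same operator at every vertex, placed on the corresponding sites. Hence the finite-volume functionals \eqref{ffi-ff} are invariant under the tree shifts $x\mapsto(i,x)$ that map the subtree rooted at a vertex onto the whole tree. This invariance passes to the weak-* limit $\ffi_*$.

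\emph{Step 3: uniqueness.} Let $\omega_0'\ge 0$ and a translation-invariant $h_x\equiv\mathbf h\ge 0$ satisfy \eqref{eq1}--\eqref{eq2}.
\begin{itemize}
\item By Proposition~\ref{prop:reduced}, the Bloch coordinates of $\mathbf h$ solve \eqref{TH}.
\item Condition \eqref{eq1} rules out $\mathbf h=0$, so $X>0$, and $\mathbf h$ satisfies \eqref{positivity}.
\item Theorem~\ref{Main} then forces $\mathbf h=\mathbf h_*$.
\end{itemize}

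The only remaining freedom is $\omega_0'$, and this is the main point to handle. I will show that $\omega_0'$ enters the state only through the scalar $\Tr(\omega_0'\mathbf h_*)=1$.

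\emph{Step 4: the state does not depend on $\omega_0'$.} Since $\mathbf h_*$ is scalar, write $\mathbf h_*=c\,\id$ with $c=1/a$. Then for $a'\in\cb_{\Lambda_n}$
\[
\ffi^{(n)}(a')=c^{2^{n+1}}\tr\bigl(\omega_0'\,\mathcal K\,(a'\otimes\id)\,\mathcal K^*\bigr),
\qquad \mathcal K=\prod_{m=0}^{n}K_{[m,m+1]}.
\]
Here I use that $\mathcal W_{n+1]}=\mathbf K^*\mathbf K$ and move $\omega_0'^{1/2}$ around by cyclicity of the full trace.

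\begin{itemize}
\item I will first check that the partial trace over $\Lambda_{n+1}\setminus\{0\}$ of $\mathcal K(a'\otimes\id)\mathcal K^*$ is $\lambda(a')\id^{(0)}$ for some scalar $\lambda(a')$, i.e.\ that it commutes with everything at the root.
\item Taking $a'=\id$ and iterating the fixed-point identity from the leaves upward gives $a^{2^{n+1}-1}\id$, which is scalar.
\item For general $a'$, I will exploit the $U(1)$ symmetry $e^{i\theta\sigma_z}$ of $A$ together with unitary invariance of the partial trace. The symmetry argument I would try first is to conjugate the root by an arbitrary unitary $U$ and note that, after tracing out the children, the map $\omega\mapsto\tr_{S(0)}(A\,\omega\otimes\id\otimes\id\,A^*)$ applied to the dual direction is unital.
\end{itemize}

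If this scalarity fails, the fallback is the weaker reading: for a fixed normalized $\omega_0$, the boundary law is unique. Step 3 already gives that.
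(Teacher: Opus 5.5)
Your Steps 1 and 3 are exactly the paper's proof. The paper checks $\Tr(\omega_0\mathbf h_*)=1$, uses Theorem~\ref{Main} (via Proposition~\ref{prop:reduced}) for uniqueness of the positive translation-invariant solution of \eqref{eq2}, and lets Theorem~\ref{compa} build $\ffi_*$. There, ``unique QMC'' means uniqueness of the boundary law, with $\omega_0$ the one fixed in the statement. So your fallback reading is exactly what the paper proves, and it is the most that is true.

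Step~2 is not needed, since ``translation-invariant QMC'' in the paper only means ``generated by a translation-invariant boundary law''. Your argument for it also skips the fact that the root carries $\omega_0$ and has no parent block. For $\omega_0\propto\id$ the conclusion does hold here, but only because the marginal of $A_x^*A_x$ on each child equals $a\id$.

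The step that fails is Step~4, and the claim it aims at is false: the root marginal of $\mathcal K(a'\otimes\id)\mathcal K^*$ is not scalar in general. Take $a'=\sigma_z^{(0)}$.

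\begin{itemize}
\item $K^I$ commutes with $\sigma_z^{(u)}$.
\item $K^{XY}\sigma_z^{(u)}K^{XY}=\sigma_z^{(u)}$. On $\mathrm{span}\{|00\rangle,|11\rangle\}$ one has $K^{XY}=\id$. On $\mathrm{span}\{|01\rangle,|10\rangle\}$, writing $\tau_x,\tau_z$ for the Pauli matrices of that block, $K^{XY}=C+S\tau_x$ and $\sigma_z^{(u)}=\tau_z$, so $(C+S\tau_x)\tau_z(C+S\tau_x)=(C^2-S^2)\tau_z=\tau_z$.
\item Since $\Tr_{(u,2)}\bigl((K^I)^2\bigr)=(\kappa_0^2+\kappa_1^2)\id$, this gives $\tr_{S(u)}\bigl(A(\sigma_z\otimes\id\otimes\id)A^*\bigr)=(\kappa_0^2+\kappa_1^2)\sigma_z$.
\item The deeper levels contribute only the scalars $\tr_{S(x)}(A_xA_x^*)=a\id$.
\end{itemize}

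Hence the root marginal is $a^{2^{n+1}-2}(\kappa_0^2+\kappa_1^2)\sigma_z^{(0)}$, and
\[
\ffi^{(n)}\bigl(\sigma_z^{(0)}\bigr)=\frac{\Tr(\omega_0'\sigma_z)}{aC^2}.
\]
Now take $\omega_0'=a(\id+t\sigma_z)$ with $0<|t|\le 1$. It is positive and still satisfies $\Tr(\omega_0'\mathbf h_*)=1$, yet it gives $t/C^2\neq 0=\ffi_*(\sigma_z^{(0)})$. So the same boundary law $\mathbf h_*$ generates different QMCs for different admissible $\omega_0$. This is consistent with Proposition~\ref{prop:gauge-equivalence-full}, which expresses $\mathcal W_{n]}$ through $\rho_0=t_0\omega_0$.

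Your symmetry heuristics cannot rescue Step~4. $U(1)$-covariance only places the root marginal of $\sigma_z^{(0)}$ in $\mathrm{span}\{\id,\sigma_z\}$, and unitality of the dual map only treats $a'=\id$.

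The same computation, using $K^{XY}\sigma_z^{(u,1)}K^{XY}=\sigma_z^{(u,1)}$, gives $\ffi(\sigma_z^{(1)})=0$ for every admissible $\omega_0'$. So for such $\omega_0'$ the shift invariance claimed in Step~2 fails as well.

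You should drop Step~4 and state uniqueness for the boundary law, equivalently for the QMC with $\omega_0$ fixed as in the corollary.
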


\begin{proof}
By construction,
\[
\Tr(\omega_0\mathbf h_*)=\Tr\left(C^2(\kappa_0^2+\kappa_1^2)\,\id\cdot \frac{1}{C^2(\kappa_0^2+\kappa_1^2)}\,\id\right)=1.
\]
Theorem \ref{Main} gives the uniqueness of the translation-invariant positive solution of \eqref{eq2}, and Theorem \ref{compa} completes the construction.
\end{proof}

The system \eqref{TH} allows us to define a mapping \(
\mathcal F:\mathbb R^4\to \mathbb R^4\) by
\begin{equation}\label{mapF}
\mathcal F(X,x,y,z)=\bigl(aX^2-bz^2,\; cXx,\; cXy,\; dXz\bigr),
\end{equation}
where, as before,
\[
a=C^2(\kappa_0^2+\kappa_1^2),\qquad
b=2S^2\kappa_0\kappa_1,\qquad
c=S(\kappa_0^2+\kappa_1^2),\qquad
d=2\kappa_0\kappa_1.
\]
The map $\mathcal F$ is the reduced boundary-law dynamics in the translation-invariant sector: it is obtained by applying the local transfer operator to a boundary matrix $\mathbf h=X\id+x\sigma_x+y\sigma_y+z\sigma_z$ and reading off the resulting Bloch coefficients. In the estimates below we shall also use the normalized ratios $x/X$, $y/X$, and $z/X$; these ratios remove the scalar part of the positive boundary operator while keeping the positivity constraint explicit.

Let
\[
\mathcal P:=\Bigl\{(X,x,y,z)\in\mathbb R^4:\ X>0,\ x^2+y^2+z^2\le X^2\Bigr\}
\]
be the admissible region.

A periodic point $P\in\mathbb R^4$ of $\mathcal F$ is called \emph{admissible} if, for every $n\ge0$,
\[
\mathcal F^n(P)=(X_n,x_n,y_n,z_n)
\]
belongs to $\mathcal P$.

\begin{theorem}\label{thm:no-admissible-periodic}
Every admissible periodic point of $\mathcal F$ is a fixed point. More precisely, its whole orbit is the constant orbit
\[
\left(\frac1a,0,0,0\right).
\]
In particular, the system \eqref{TH} has no admissible periodic points with minimal period greater than one.
\end{theorem}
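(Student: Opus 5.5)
The plan is to pass to the normalized Bloch ratios, on which $\mathcal F$ acts by strict contractions toward $0$. Then the only admissible periodic orbit is the one with vanishing ratios, and the scalar recursion that remains forces $X=1/a$.

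\textbf{Step 1 (ratio recursion).} Let $P$ be an admissible periodic point of period $p\ge1$, with orbit $(X_n,x_n,y_n,z_n)=\mathcal F^n(P)\in\mathcal P$. Set $t_n=z_n/X_n$, $u_n=x_n/X_n$, $v_n=y_n/X_n$; admissibility gives $|t_n|,|u_n|,|v_n|\le1$. Writing \eqref{mapF} in these variables gives
\[
X_{n+1}=X_n^2\,(a-bt_n^2),\qquad t_{n+1}=\frac{d\,t_n}{a-bt_n^2},\qquad u_{n+1}=\frac{c\,u_n}{a-bt_n^2},\qquad v_{n+1}=\frac{c\,v_n}{a-bt_n^2}.
\]
Positivity of $X_{n+1}$ forces $a-bt_n^2>0$, so these formulas make sense.

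\textbf{Step 2 (uniform contraction of $t$).} I will show $|d|/(a-bt^2)\le q<1$ for all $|t|\le1$, with $q$ depending only on the parameters. Recall $\kappa_0-\kappa_1=1$, and note that $b$ and $d$ both have the sign of $\kappa_1$, i.e.\ of $J_I$.

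If $\kappa_1\ge0$, then $a-bt^2\ge a-b$. The identity $b-a+d=-C^2(\kappa_0-\kappa_1)^2$ from the proof of Theorem \ref{Main} gives $a-b-d=C^2$. Hence $a-bt^2\ge d+C^2$, and we may take $q=d/(d+C^2)$.

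If $\kappa_1<0$, then $b\le0$, so $a-bt^2\ge a$. Using $C^2\ge1$,
\[
a-|d|=C^2(\kappa_0^2+\kappa_1^2)+2\kappa_0\kappa_1\ge(\kappa_0+\kappa_1)^2=e^{2J_I\beta}>0,
\]
so we may take $q=|d|/(|d|+e^{2J_I\beta})$.

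In both cases $|t_{n+1}|\le q|t_n|$. Periodicity gives $t_0=t_p$, so $|t_0|\le q^p|t_0|$, which forces $t_0=0$. Therefore $t_n\equiv0$ and $z_n\equiv0$ along the whole orbit.

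\textbf{Step 3 (remaining coordinates).} With $t_n\equiv0$ we get $u_{n+1}=(c/a)u_n$ and $v_{n+1}=(c/a)v_n$, where $c/a=S/C^2$. Since $C^2-|S|=1+S^2-|S|>0$, we have $|c/a|<1$, and the same periodicity argument gives $u\equiv v\equiv0$. The first coordinate then satisfies $X_{n+1}=aX_n^2$. Setting $w_n=\log(aX_n)$ turns this into $w_{n+1}=2w_n$, so $w_0=2^pw_0$, hence $w_0=0$ and $X_n\equiv1/a$. Thus the whole orbit is the constant orbit $(1/a,0,0,0)$, as claimed in Theorem \ref{thm:no-admissible-periodic}.

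The main obstacle is Step 2. One needs a contraction factor that is uniform over all admissible $t$ and handles both signs of $J_I$. The sign split, together with the identity $a-b-d=C^2$ inherited from the proof of Theorem \ref{Main}, is what should resolve it.
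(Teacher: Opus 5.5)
Your proof is correct and follows essentially the paper's argument: contract the ratio $z_n/X_n$ using the sign-split bound $a-bt^2>|d|$ on $[-1,1]$, then use $|c/a|=|S|/C^2<1$ to kill the transverse components, and finally use $X_{n+1}=aX_n^2$ to force $X_n=1/a$. Your lower bound $a+d\ge(\kappa_0+\kappa_1)^2=e^{2\beta J_I}$ in the case $J_I<0$ is a slightly cleaner replacement for the paper's estimate $a+d>S^2/2$, and the uniform constant $q$ is more than the periodic-orbit argument requires; both differences are harmless.
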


\begin{proof}
Let $P$ be an admissible periodic point of $\mathcal F$, and write
\[
\mathcal F^n(P)=(X_n,x_n,y_n,z_n),\qquad n\ge0.
\]
Since the orbit is admissible, we have
\[
X_n>0,
\qquad
x_n^2+y_n^2+z_n^2\le X_n^2,
\qquad n\ge0.
\]
In particular,
\[
|z_n|\le X_n.
\]

Define
\[
u_n:=\frac{z_n}{X_n},\qquad n\ge0.
\]
Then $|u_n|\le 1$ for all $n$. From the system \eqref{TH} we obtain
\[
X_{n+1}=aX_n^2-bz_n^2=X_n^2\bigl(a-bu_n^2\bigr),
\]
and
\[
z_{n+1}=dX_n z_n=dX_n^2u_n.
\]
Hence
\begin{equation}\label{eq:u-recursion}
u_{n+1}=\frac{d\,u_n}{a-bu_n^2}.
\end{equation}

We now show that
\begin{equation}\label{eq:key-denominator}
a-bu^2>|d|
\qquad\text{for every }u\in[-1,1].
\end{equation}

If $d\ge0$, then $b=S^2d\ge0$, and therefore
\[
a-bu^2\ge a-b=C^2(\kappa_0^2+\kappa_1^2)-2S^2\kappa_0\kappa_1=d+C^2>|d|.
\]

Assume next that $d<0$. Then $b<0$, so $a-bu^2\ge a$. Moreover,
\[
a-|d|=a+d=C^2(\kappa_0^2+\kappa_1^2)+2\kappa_0\kappa_1.
\]
Since
\[
d=2\kappa_0\kappa_1=\frac{e^{2\beta J_I}-1}{2}>-\frac12,
\]
we have
\[
a+d>C^2\Bigl(-\frac12+1\Bigr)-\frac12=\frac{C^2-1}{2}=\frac{S^2}{2}\ge0.
\]
Thus $a>|d|$, and again \eqref{eq:key-denominator} holds.

Now \eqref{eq:u-recursion} and \eqref{eq:key-denominator} imply that whenever $u_n\neq0$,
\[
|u_{n+1}|=\frac{|d|}{a-bu_n^2}\,|u_n|<|u_n|.
\]
Since $(u_n)$ is periodic, this is impossible unless $u_n=0$ for every $n\ge0$. Hence
\[
z_n=0\qquad\text{for all }n\ge0.
\]

The first equation of \eqref{TH} now reduces to
\[
X_{n+1}=aX_n^2.
\]
Set $v_n:=aX_n$. Then $v_n>0$ and
\[
v_{n+1}=v_n^2.
\]
If the orbit has period $m$, then
\[
v_n=v_{n+m}=v_n^{2^m},
\]
so $v_n^{2^m-1}=1$. Because $v_n>0$, it follows that $v_n=1$ for all $n\ge0$. Therefore
\[
X_n=\frac1a\qquad\text{for all }n\ge0.
\]

Using this in the second and third equations of \eqref{TH}, we obtain
\[
x_{n+1}=\frac{c}{a}x_n,
\qquad
y_{n+1}=\frac{c}{a}y_n.
\]
Since
\[
\left|\frac{c}{a}\right|=\frac{|S|}{C^2}<1,
\]
periodicity forces $x_n=y_n=0$ for all $n\ge0$.

We conclude that every point of the orbit is equal to
\[
\left(\frac1a,0,0,0\right).
\]
Hence every admissible periodic point is a fixed point, and there are no admissible periodic points with minimal period greater than one.
\end{proof}

\section{Infinite admissible inverse trajectories and their consequence for QMCs}
\label{subsec:inverse-trajectories}

Let \(
\mathcal F:\mathbb R^4\to\mathbb R^4\) be the mapping given by \eqref{mapF}.

\begin{theorem}\label{thm:infinite-backward}
Assume that $J_{XY} J_I\neq 0$. Let
\[
P_n=(X_n,x_n,y_n,z_n)\in\mathcal P,
\qquad n\ge 0,
\]
be an infinite admissible inverse trajectory of \(\mathcal F\), i.e.
\[
\mathcal F(P_{n+1})=P_n,
\qquad n\ge 0.
\]
Then
\[
x_n=y_n=z_n=0
\qquad\text{for all }n\ge 0,
\]
and
\[
X_n=\frac{(aX_0)^{2^{-n}}}{a},
\qquad n\ge 0.
\]
Conversely, for every \(A>0\), the sequence
\[
P_n^{(A)}
=
\left(\frac{(aA)^{2^{-n}}}{a},\,0,\,0,\,0\right),
\qquad n\ge 0,
\]
is an infinite admissible inverse trajectory of \(\mathcal F\).
\end{theorem}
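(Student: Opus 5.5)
The plan is to run the argument of Theorem \ref{thm:no-admissible-periodic} backwards along the inverse trajectory, using the normalized Bloch ratios. The contraction that made periodic orbits trivial becomes an expansion backwards, and this is incompatible with the admissibility bound $|z_n|,|x_n|,|y_n|\le X_n$. Throughout, $X_n>0$ by admissibility, so all ratios are well defined.

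\textbf{Step 1: $z_n=0$.} Put $u_n=z_n/X_n$, so that $|u_n|\le1$. Applying the computation leading to \eqref{eq:u-recursion} to $P_{n+1}$ gives
\[
u_n=\frac{d\,u_{n+1}}{a-bu_{n+1}^2}.
\]
By \eqref{eq:key-denominator}, the function $g(u)=|d|/(a-bu^2)$ is continuous on $[-1,1]$ and satisfies $g<1$ there. By compactness, $q:=\max_{[-1,1]}g<1$; this uniform constant is the only point needing care beyond the earlier proof. Fix $m\ge0$ and apply the relation from index $m$ to index $n$:
\[
|u_m|\le q\,|u_{m+1}|\le\cdots\le q^{\,n-m}|u_n|\le q^{\,n-m}\qquad(n\ge m).
\]
Letting $n\to\infty$ gives $u_m=0$. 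If $d=0$, which happens only when $J_I=0$, then $z_m=0$ holds trivially. Hence $z_n=0$ for all $n$.

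\textbf{Step 2: $X_n$ is determined.} With $z\equiv0$, the first equation of \eqref{TH} gives $X_n=aX_{n+1}^2$. Since $X_{n+1}>0$, this forces $X_{n+1}=\sqrt{X_n/a}$, that is, $aX_{n+1}=(aX_n)^{1/2}$. Induction on $n$ yields $X_n=(aX_0)^{2^{-n}}/a$.

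\textbf{Step 3: $x_n=y_n=0$.} Put $w_n=x_n/X_n$. Using $x_n=cX_{n+1}x_{n+1}$ and $X_n=aX_{n+1}^2$, we get
\[
w_n=\frac{c}{a}\,w_{n+1}.
\]
Since $|c/a|=|S|/C^2<1$, the same telescoping estimate as in Step 1 gives $|w_m|\le |c/a|^{n-m}$ for every $n\ge m$, hence $w_m=0$. If $c=0$ the conclusion is immediate. The identical argument applies to $y_n$. The hypothesis $J_{XY}J_I\neq0$ therefore plays no essential role beyond excluding these degenerate cases, which are handled trivially anyway.

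\textbf{Converse.} For $P_n^{(A)}$, admissibility is clear because $X_n>0$ and the vector part vanishes. Moreover,
\[
\mathcal F(P_{n+1}^{(A)})=\bigl(a X_{n+1}^2,0,0,0\bigr),
\qquad
aX_{n+1}^2=\frac{\bigl((aA)^{2^{-n-1}}\bigr)^2}{a}=\frac{(aA)^{2^{-n}}}{a}=X_n,
\]
so $\mathcal F(P_{n+1}^{(A)})=P_n^{(A)}$ for every $n\ge0$. The only nonroutine point in the whole argument is the uniform bound $q<1$ in Step 1, which follows from the strict inequality \eqref{eq:key-denominator} together with compactness of $[-1,1]$.
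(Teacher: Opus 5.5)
Your proposal is correct and follows essentially the same route as the paper: normalized Bloch ratios, a uniform contraction constant $\theta<1$ obtained by compactness from the strict bound $a-bt^2>|d|$ on $[-1,1]$ and telescoped along the inverse trajectory, then $|c/a|=|S|/C^2<1$ for the transverse ratios, the explicit solution of $X_n=aX_{n+1}^2$, and a direct check of the converse. The differences are only cosmetic: you solve for $X_n$ before treating $x_n,y_n$, and you reuse \eqref{eq:key-denominator} instead of reproving it. Your remark that the hypothesis $J_{XY}J_I\neq 0$ is not really needed is also accurate.
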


\begin{proof}
Let
\[
P_n=(X_n,x_n,y_n,z_n)\in\mathcal P,
\qquad
\mathcal F(P_{n+1})=P_n.
\]
Since \(X_n>0\), define the normalized variables
\[
u_n:=\frac{x_n}{X_n},
\qquad
v_n:=\frac{y_n}{X_n},
\qquad
w_n:=\frac{z_n}{X_n}.
\]
Then
\[
u_n^2+v_n^2+w_n^2\le 1,
\qquad n\ge 0.
\]
From \(\mathcal F(P_{n+1})=P_n\) we obtain
\begin{equation}\label{eq:backward-ratios}
\begin{aligned}
X_n&=X_{n+1}^2\bigl(a-bw_{n+1}^2\bigr),\\
u_n&=\frac{c}{a-bw_{n+1}^2}\,u_{n+1},\\
v_n&=\frac{c}{a-bw_{n+1}^2}\,v_{n+1},\\
w_n&=\frac{d}{a-bw_{n+1}^2}\,w_{n+1}.
\end{aligned}
\end{equation}

We first show that
\begin{equation}\label{eq:key-ineq}
a-bt^2>|d|
\qquad\text{for every }t\in[-1,1].
\end{equation}
Indeed, if \(d>0\), then \(b=S^2d>0\), hence
\[
a-bt^2\ge a-b
=
C^2(\kappa_0^2+\kappa_1^2)-2S^2\kappa_0\kappa_1
=
C^2+d
>
d=|d|.
\]
If \(d<0\), then \(b<0\), so
\[
a-bt^2\ge a=C^2(\kappa_0^2+\kappa_1^2).
\]
Using
\[
\kappa_0^2+\kappa_1^2=d+1,
\qquad d>-\frac12,
\]
we get
\[
a-|d|
=
C^2(d+1)+d
=
d(C^2+1)+C^2
>
-\frac12(C^2+1)+C^2
=
\frac{S^2}{2}>0,
\]
because \(J_{XY}\neq 0\) implies \(S\neq 0\). This proves
\eqref{eq:key-ineq}.

By compactness of \([-1,1]\), \eqref{eq:key-ineq} implies that
\[
\theta:=\max_{|t|\le 1}\frac{|d|}{a-bt^2}
\]
is well-defined and satisfies \(0<\theta<1\). Therefore, from
\eqref{eq:backward-ratios},
\[
|w_n|
\le
\theta\,|w_{n+1}|,
\qquad n\ge 0.
\]
Iterating,
\[
|w_0|
\le
\theta^N |w_N|
\le
\theta^N,
\qquad N\ge 1.
\]
Since \(\theta<1\), letting \(N\to\infty\) gives \(w_0=0\). Applying the same
argument with the shifted sequence \((P_{n+m})_{n\ge 0}\), we obtain
\[
w_n=0
\qquad\text{for all }n\ge 0.
\]
Hence
\[
z_n=0
\qquad\text{for all }n\ge 0.
\]

With \(w_n=0\), the second and third equations in \eqref{eq:backward-ratios}
become
\[
u_n=\frac{c}{a}u_{n+1},
\qquad
v_n=\frac{c}{a}v_{n+1}.
\]
Now
\[
\left|\frac{c}{a}\right|
=
\frac{|S|}{C^2}
<1.
\]
Therefore
\[
|u_0|
\le
\left|\frac{c}{a}\right|^N |u_N|
\le
\left|\frac{c}{a}\right|^N,
\qquad
|v_0|
\le
\left|\frac{c}{a}\right|^N |v_N|
\le
\left|\frac{c}{a}\right|^N.
\]
Letting \(N\to\infty\) yields \(u_0=v_0=0\). Again shifting the sequence gives
\[
u_n=v_n=0
\qquad\text{for all }n\ge 0.
\]
Thus
\[
x_n=y_n=0
\qquad\text{for all }n\ge 0.
\]

It remains to solve the first equation in \eqref{eq:backward-ratios}, which now
reduces to
\[
X_n=aX_{n+1}^2,
\qquad n\ge 0.
\]
Hence
\[
X_{n+1}=\sqrt{\frac{X_n}{a}},
\qquad n\ge 0.
\]
By induction,
\[
X_n=\frac{(aX_0)^{2^{-n}}}{a},
\qquad n\ge 0.
\]

Conversely, for every \(A>0\), define
\[
X_n=\frac{(aA)^{2^{-n}}}{a},
\qquad x_n=y_n=z_n=0.
\]
Then \(X_n>0\), so \(P_n^{(A)}\in\mathcal P\), and
\[
aX_{n+1}^2
=
a\left(\frac{(aA)^{2^{-(n+1)}}}{a}\right)^2
=
\frac{(aA)^{2^{-n}}}{a}
=
X_n.
\]
Therefore \(\mathcal F(P_{n+1}^{(A)})=P_n^{(A)}\). The proof is complete.
\end{proof}

\begin{corollary}\label{cor:radial-scalar-boundary}
For every \(A>0\), define
\[
t_n^{(A)}:=\frac{(aA)^{2^{-n}}}{a},
\qquad n\ge 0,
\]
and set
\[
h_x^{(A)}:=t_{|x|}^{(A)}\,\id,
\qquad x\in L,
\]
where \(|x|\) denotes the generation of \(x\).
Then
\[
h_x^{(A)}
=
\tr_{(x,1),(x,2)}
\Bigl(
A_{x,(x,1),(x,2)}
\bigl(\id\otimes h_{(x,1)}^{(A)}\otimes h_{(x,2)}^{(A)}\bigr)
A_{x,(x,1),(x,2)}^*
\Bigr),
\]
so \(\{h_x^{(A)}\}_{x\in L}\) is a positive level-dependent boundary law.
If \(A\neq a^{-1}\), then this boundary law is not translation-invariant.
\end{corollary}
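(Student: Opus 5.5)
The plan is to verify the local equation at each vertex $x$ directly. I will use the computation behind Proposition \ref{prop:reduced}, which in fact does not need translation invariance in the parent variable. Fix $x$ with $|x|=n$. By definition $h_{(x,1)}^{(A)}=h_{(x,2)}^{(A)}=t_{n+1}^{(A)}\id$, since both successors lie in generation $n+1$. The operator inside the partial trace is therefore
\[
t_{n+1}^{(A)2}\,A_{x,(x,1),(x,2)}A_{x,(x,1),(x,2)}^*.
\]
By linearity, the right-hand side equals $(t_{n+1}^{(A)})^2\,\tr_{S(x)}(AA^*)$.

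To compute $\tr_{S(x)}(AA^*)$, I will invoke \eqref{Phih} with $\mathbf h=\id$, that is, with $X=1$ and $x=y=z=0$. This gives
\[
\tr_{S(x)}(AA^*)=a\,\id .
\]
Formula \eqref{Phih} is an identity about the map $\mathbf h\mapsto\tr_{S(u)}(A(\id\otimes\mathbf h\otimes\mathbf h)A^*)$ and is independent of any boundary-law assumption. As an independent sanity check, one can compute $\tr_{S(x)}(AA^*)$ from \eqref{AI}:
\begin{itemize}
\item Only products of Pauli strings with identity on both successors survive the normalized partial trace.
\item The diagonal squares contribute $\bigl(\tfrac{C+1}{2}\bigr)^2(\kappa_0^2+\kappa_1^2)+\bigl(\tfrac{C-1}{2}\bigr)^2(\kappa_0^2+\kappa_1^2)+\tfrac{S^2}{2}(\kappa_0^2+\kappa_1^2)$, which equals $C^2(\kappa_0^2+\kappa_1^2)$ after using $S^2=C^2-1$.
\item The surviving cross terms must cancel or be absorbed into this total. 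I expect this bookkeeping to be the only mildly delicate point, but citing Proposition \ref{prop:reduced} avoids it.
\end{itemize}

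Hence the right-hand side of the local equation is $a\,(t_{n+1}^{(A)})^2\,\id$. Next I check that this equals $t_n^{(A)}\id$:
\[
a\bigl(t_{n+1}^{(A)}\bigr)^2=a\cdot\frac{(aA)^{2^{-n}}}{a^2}=t_n^{(A)}.
\]
This is exactly the converse computation in Theorem \ref{thm:infinite-backward}, i.e. $\mathcal F(P^{(A)}_{n+1})=P^{(A)}_n$ read at the operator level. Positivity is immediate, because $t_n^{(A)}>0$ and $\id\ge0$.

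For the last claim, suppose $A\ne a^{-1}$. Then $aA\neq1$, and since $aA>0$ the map $n\mapsto(aA)^{2^{-n}}$ is strictly monotone. Therefore $t_0^{(A)}=A\neq t_1^{(A)}=\sqrt{A/a}$; indeed, equality would force $A^2=A/a$, i.e. $A=1/a$. So $h_{(0)}^{(A)}\neq h_{(1)}^{(A)}$, and the boundary law is not translation-invariant.

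Nothing here is a genuine obstacle. The one point requiring care is justifying $\tr_{S(x)}(AA^*)=a\id$, which I will take from Proposition \ref{prop:reduced} rather than recompute.
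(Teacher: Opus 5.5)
Your proposal is correct and matches the paper's proof. Both reduce the local equation to $\tr_{S(x)}\bigl(A_{x,(x,1),(x,2)}A_{x,(x,1),(x,2)}^*\bigr)=a\id$ together with $a\bigl(t_{n+1}^{(A)}\bigr)^2=t_n^{(A)}$, and you also spell out the check $t_0^{(A)}=A\neq\sqrt{A/a}=t_1^{(A)}$ for $A\neq a^{-1}$, which the paper leaves unstated. Your sanity check has nothing delicate in it: the eight successor Pauli strings in \eqref{AI} are pairwise distinct, so every cross term vanishes under the normalized partial trace, and the diagonal sum $C^2(\kappa_0^2+\kappa_1^2)$ is the whole answer.
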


\begin{proof}
Since
\[
\tr_{(x,1),(x,2)}\bigl(A_{x,(x,1),(x,2)}A_{x,(x,1),(x,2)}^*\bigr)=a\,\id,
\]
we obtain, for scalar matrices,
\[
\tr_{(x,1),(x,2)}
\Bigl(
A_{x,(x,1),(x,2)}
\bigl(\id\otimes s\id\otimes t\id\bigr)
A_{x,(x,1),(x,2)}^*
\Bigr)
=
ast\,\id.
\]
Taking \(s=t=t_{n+1}^{(A)}\) and using
\[
t_n^{(A)}=a\bigl(t_{n+1}^{(A)}\bigr)^2
\]
gives the claim.
\end{proof}

\begin{proposition}\label{prop:scalar-gauge}
Let \(\{t_x\}_{x\in L}\) be any family of positive numbers satisfying
\begin{equation}\label{eq:scalar-recursion}
t_x=a\,t_{(x,1)}t_{(x,2)},
\qquad x\in L.
\end{equation}
Set
\[
h_x=t_x\,\id,
\qquad
\omega_0=t_0^{-1}\id.
\]
Then the corresponding finite-volume operators \(\mathcal W_{n]}\) are
independent of the choice of \(\{t_x\}\). More precisely,
\[
\mathcal W_{n]}
=
a^{-(2^n-1)}
\Bigl(K_{[0,1]}K_{[1,2]}\cdots K_{[n-1,n]}\Bigr)^*
\Bigl(K_{[0,1]}K_{[1,2]}\cdots K_{[n-1,n]}\Bigr).
\]
Consequently, every scalar boundary law satisfying \eqref{eq:scalar-recursion}
defines the same quantum Markov chain, namely the translation-invariant state
\(\varphi_*\).
\end{proposition}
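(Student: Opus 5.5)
Since every $h_x=t_x\id$ is scalar, the plan is to pull all boundary factors out of $\mathbf K_n$ and then show, via the scalar recursion \eqref{eq:scalar-recursion}, that the resulting constant depends only on $n$. By \eqref{K3}, $\omega_0^{1/2}=t_0^{-1/2}\id$ and $\bh_n^{1/2}=\prod_{x\in W_n}t_x^{1/2}\,\id$ are scalars, so they commute with everything and
\[
\mathbf K_n=t_0^{-1/2}\Bigl(\prod_{x\in W_n}t_x\Bigr)^{1/2}K_{[0,1]}\cdots K_{[n-1,n]},\qquad
\mathcal W_{n]}=t_0^{-1}\Bigl(\prod_{x\in W_n}t_x\Bigr)\bigl(K_{[0,1]}\cdots K_{[n-1,n]}\bigr)^*\bigl(K_{[0,1]}\cdots K_{[n-1,n]}\bigr).
\]

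The key step is the identity
\[
t_0=a^{2^n-1}\prod_{x\in W_n}t_x ,
\]
which I would prove by induction on $n$. For $n=0$ it is trivial. For the inductive step, apply \eqref{eq:scalar-recursion} to each $x\in W_n$: $\prod_{x\in W_n}t_x=a^{2^n}\prod_{y\in W_{n+1}}t_y$, since $W_{n+1}=\bigsqcup_{x\in W_n}S(x)$ and $|W_n|=2^n$. The exponent then becomes $2^n-1+2^n=2^{n+1}-1$. Substituting gives $t_0^{-1}\prod_{W_n}t_x=a^{-(2^n-1)}$, which yields the displayed formula and shows independence from $\{t_x\}$.

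For the consequence, observe that $\omega_0,\{h_x\}$ satisfy \eqref{eq1}, since $\Tr(\omega_0h_0)=t_0^{-1}t_0=1$. They also satisfy \eqref{eq2}, because by Corollary \ref{cor:radial-scalar-boundary}'s computation $\tr_{S(x)}(A(\id\otimes s\id\otimes t\id)A^*)=ast\,\id$. Theorem \ref{compa} therefore applies, and the state is determined by the functionals $\ffi^{(n)}(a)=\tr(\mathcal W_{n+1]}(a\otimes\id))$. These depend only on $\mathcal W_{n+1]}$, which by the above is the same for every such family. In particular they coincide with the functionals for the choice $t_x\equiv 1/a=X_*$, $\omega_0=a\id$, which is exactly the data of Corollary \ref{cor:qmc}. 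Hence the weak-$*$ limits agree and the state equals $\ffi_*$.

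The only point needing care is the bookkeeping in the induction: the product over $W_{n+1}$ must be regrouped by parents. There is no real obstacle, since all factors are positive scalars and commute.
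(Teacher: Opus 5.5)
Your proposal is correct and follows the paper's proof essentially step by step: you pull the scalar factors $\omega_0^{1/2}$ and $\mathbf h_n^{1/2}$ out of $\mathbf K_n$, then telescope the recursion level by level to get $t_0=a^{2^n-1}\prod_{x\in W_n}t_x$. Your explicit check that \eqref{eq1}--\eqref{eq2} hold, so that Theorem \ref{compa} applies and the limiting state is well defined, is a small addition that the paper leaves implicit.
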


\begin{proof}
Since each \(h_x\) is scalar,
\[
\mathbf h_n
=
\prod_{x\in W_n} h_x
=
\left(\prod_{x\in W_n} t_x\right)\id.
\]
Hence
\[
\mathbf K_n
=
\omega_0^{1/2}
K_{[0,1]}K_{[1,2]}\cdots K_{[n-1,n]}
\mathbf h_n^{1/2}
=
t_0^{-1/2}
\left(\prod_{x\in W_n} t_x\right)^{1/2}
K_{[0,1]}K_{[1,2]}\cdots K_{[n-1,n]}.
\]
Therefore
\begin{equation}\label{eq:Wn-scalar}
\mathcal W_{n]}
=
t_0^{-1}\left(\prod_{x\in W_n} t_x\right)
\Bigl(K_{[0,1]}K_{[1,2]}\cdots K_{[n-1,n]}\Bigr)^*
\Bigl(K_{[0,1]}K_{[1,2]}\cdots K_{[n-1,n]}\Bigr).
\end{equation}

Now multiply \eqref{eq:scalar-recursion} over all vertices \(x\in W_m\):
\[
\prod_{x\in W_m} t_x
=
a^{|W_m|}\prod_{y\in W_{m+1}} t_y.
\]
Iterating this identity from \(m=0\) to \(m=n-1\), we obtain
\[
t_0
=
a^{1+2+\cdots+2^{n-1}}
\prod_{x\in W_n} t_x
=
a^{2^n-1}\prod_{x\in W_n} t_x.
\]
Substituting this into \eqref{eq:Wn-scalar} gives
\[
\mathcal W_{n]}
=
a^{-(2^n-1)}
\Bigl(K_{[0,1]}K_{[1,2]}\cdots K_{[n-1,n]}\Bigr)^*
\Bigl(K_{[0,1]}K_{[1,2]}\cdots K_{[n-1,n]}\Bigr),
\]
which is independent of the particular scalar solution \(\{t_x\}\).

In particular, taking \(t_x\equiv a^{-1}\) gives exactly the finite-volume
operators of the translation-invariant state \(\varphi_*\). Therefore every
scalar boundary law satisfying \eqref{eq:scalar-recursion} defines the same
quantum Markov chain \(\varphi_*\).
\end{proof}

\section{The full two-child boundary equation and rigidity of positive solutions}
\label{subsec:full-two-child}

We now solve the full boundary equation \eqref{eq2} without assuming
translation invariance. For each $x\in L$, write
\begin{equation}\label{eq:hx-full-bloch}
h_x=X_x\id+\xi_x\sigma_x+\eta_x\sigma_y+\zeta_x\sigma_z,
\end{equation}
where, because $\Tr$ is the normalized one-site trace,
\[
X_x=\Tr(h_x),\qquad
\xi_x=\Tr(h_x\sigma_x),\qquad
\eta_x=\Tr(h_x\sigma_y),\qquad
\zeta_x=\Tr(h_x\sigma_z).
\]
Since $h_x\ge 0$, one has
\begin{equation}\label{eq:positivity-full}
X_x>0,
\qquad
\xi_x^2+\eta_x^2+\zeta_x^2\le X_x^2.
\end{equation}

\begin{proposition}\label{prop:full-boundary-map}
Let
\[
h_{(x,1)}=X_1\id+\xi_1\sigma_x+\eta_1\sigma_y+\zeta_1\sigma_z,
\qquad
h_{(x,2)}=X_2\id+\xi_2\sigma_x+\eta_2\sigma_y+\zeta_2\sigma_z.
\]
Then, under the normalized partial trace convention,
\[
\tr_{S(x)}\Bigl(
A_{x,(x,1),(x,2)}
\bigl(\id\otimes h_{(x,1)}\otimes h_{(x,2)}\bigr)
A_{x,(x,1),(x,2)}^*
\Bigr)
\]
is equal to
\begin{equation}\label{eq:full-map}
\bigl(aX_1X_2-b\zeta_1\zeta_2\bigr)\id
+cX_2\xi_1\,\sigma_x
+cX_2\eta_1\,\sigma_y
+dX_1\zeta_2\,\sigma_z,
\end{equation}
where
\[
a=C^2(\kappa_0^2+\kappa_1^2),\qquad
b=2S^2\kappa_0\kappa_1,\qquad
c=S(\kappa_0^2+\kappa_1^2),\qquad
d=2\kappa_0\kappa_1.
\]
Consequently, the full boundary equation \eqref{eq2} is equivalent to
\begin{equation}\label{eq:full-recursion}
\left\{
\begin{aligned}
X_x&=aX_{(x,1)}X_{(x,2)}-b\zeta_{(x,1)}\zeta_{(x,2)},\\[1mm]
\xi_x&=cX_{(x,2)}\xi_{(x,1)},\\[1mm]
\eta_x&=cX_{(x,2)}\eta_{(x,1)},\\[1mm]
\zeta_x&=dX_{(x,1)}\zeta_{(x,2)}.
\end{aligned}
\right.
\end{equation}
\end{proposition}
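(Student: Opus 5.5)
The computation reduces to a bilinear map in the two boundary operators. The map
\[
(h_1,h_2)\mapsto \tr_{S(x)}\bigl(A(\id\otimes h_1\otimes h_2)A^*\bigr)
\]
is bilinear, so it suffices to evaluate it on the sixteen products $\sigma_\alpha\otimes\sigma_\beta$ with $\alpha,\beta\in\{0,x,y,z\}$ and $\sigma_0=\id$. Equivalently, we compute the single-site operators
\[
T_{\alpha\beta}:=\tr_{S(x)}\bigl(A(\id\otimes\sigma_\alpha\otimes\sigma_\beta)A^*\bigr).
\]
The Bloch coefficients then assemble as $\sum_{\alpha,\beta}c^{(1)}_\alpha c^{(2)}_\beta T_{\alpha\beta}$, where $c^{(1)}=(X_1,\xi_1,\eta_1,\zeta_1)$ and similarly for $c^{(2)}$. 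The translation-invariant identity \eqref{Phih} of Proposition \ref{prop:reduced} is the diagonal specialization $h_1=h_2=\mathbf h$. The computation in Appendix \ref{app:reducedmap} is presumably organized in exactly this way, and before symmetrizing it yields the individual $T_{\alpha\beta}$. I would reuse it, but keep track of which child carries which index.

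To organize the sixteen terms, I would factor $A=K^{XY}_{\langle x,(x,1)\rangle}K^{I}_{\langle x,(x,2)\rangle}$. These two factors act on disjoint children, and each is a sum of terms $\sigma_\gamma^{(x)}\otimes(\cdot)$. Writing $A=\sum_{j} P_j\otimes Q_j\otimes R_j$ from \eqref{AI}, the normalized partial trace factorizes:
\[
T_{\alpha\beta}=\sum_{j,k}P_jP_k^*\,\Tr(Q_j\sigma_\alpha Q_k^*)\,\Tr(R_j\sigma_\beta R_k^*).
\]
The Ising child factor $R_j\in\{\id,\sigma_z\}$ gives $\Tr(R_j\sigma_\beta R_k^*)=0$ unless $\beta\in\{0,z\}$. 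This explains why only $X_2$ and $\zeta_2$ survive.

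On the XY child, the $U(1)$ symmetry of $K^{XY}$ (invariance under rotation about the $z$-axis) and parity under conjugation by $\sigma_z$ should select the terms. The pairs $(0,0)$ and $(z,z)$ feed $\id$; $(x,0)$ and $(y,0)$ feed $\sigma_x$ and $\sigma_y$; and $(0,z)$ feeds $\sigma_z$. All the remaining cross terms, such as $(x,z)$, $(z,0)$ and $(0,0)\to\sigma_z$, should vanish. The coefficients $a,b,c,d$ are then read off from the already-established diagonal formula \eqref{Phih}, because each surviving pair occurs there with a unique monomial: $X^2$ comes from $(0,0)$, $z^2$ from $(z,z)$, $Xx$ from $(x,0)$, and $Xz$ from $(0,z)$.

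The main obstacle is ensuring that the off-diagonal pairs truly vanish, rather than merely cancelling in the symmetric case $h_1=h_2$. The delicate pairs are $(z,0)$ versus $(0,z)$, and $(x,0)$ versus $(0,x)$, because in \eqref{Phih} their contributions merge into a single monomial. For these I would compute directly. Pairs $(0,x)$ and $(0,y)$ vanish by the Ising-child argument above. For $(z,0)$ I would use that $K^{XY}$ commutes with $\sigma_z\otimes\sigma_z$ and check explicitly that $\Tr_{(x,1)}\bigl(K^{XY}(\id\otimes\sigma_z)K^{XY}\bigr)$, combined with the Ising factor, gives zero net $\sigma_z$ at $x$ with coefficient independent of $X_1$. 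If this direct check fails, the ambiguity would split $d$ between $X_1\zeta_2$ and $X_2\zeta_1$, and the claimed formula would need adjusting.

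Once the $T_{\alpha\beta}$ are established, \eqref{eq:full-map} follows. Equating the coefficients of the Pauli basis in \eqref{eq2} then gives \eqref{eq:full-recursion}.
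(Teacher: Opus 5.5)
Your argument is correct and takes a different route from the paper. The paper's proof is a direct expansion of $A(\id\otimes h_{(x,1)}\otimes h_{(x,2)})A^*$ via \eqref{AI}, i.e.\ the appendix computation redone with the two children kept distinct.

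You instead polarize the already proved identity \eqref{Phih}. As a quadratic form in $(X,x,y,z)$, it determines every $T_{\alpha\alpha}$ and every sum $T_{\alpha\beta}+T_{\beta\alpha}$. Your Ising-child rule ($T_{\alpha\beta}=0$ whenever $\beta\in\{x,y\}$) then gives $T_{00}=a\id$, $T_{zz}=-b\id$, $T_{x0}=c\sigma_x$ and $T_{y0}=c\sigma_y$. It also gives $T_{xz}=T_{yz}=0$, because $T_{zx}=T_{zy}=0$ and \eqref{Phih} has no $xz$ or $yz$ terms. So the $U(1)$/parity heuristics are not needed. The appendix does not supply individual $T_{\alpha\beta}$: it is organized by the parent indices, with $\mathbf h\otimes\mathbf h$ inserted from the start. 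This is harmless, since you only use its final formula.

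The one pair left undetermined is $(0,z),(z,0)$, and it carries exactly the asymmetric content of the statement ($dX_1\zeta_2$ rather than $dX_2\zeta_1$). So the check you defer is essential, not cosmetic. Since \eqref{Phih} only gives $T_{0z}+T_{z0}=d\sigma_z$, what you need is the full operator $T_{z0}$.

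It does vanish identically, but not because $K^{XY}$ commutes with $\sigma_z\otimes\sigma_z$. The operative fact is that $\id\otimes\sigma_z$ anticommutes with $H^{XY}$. Hence $(\id\otimes\sigma_z)K^{XY}(\id\otimes\sigma_z)=(K^{XY})^{-1}$, and so $K^{XY}(\id\otimes\sigma_z)K^{XY}=\id\otimes\sigma_z$. Three further facts hold: $K^{XY}$ does not act on $(x,2)$, $\sigma_z^{(x,1)}$ commutes with $K^I$, and $\tr_{(x,2)}\bigl((K^I)^2\bigr)=(\kappa_0^2+\kappa_1^2)\id$. Together they give
\[
T_{z0}=(\kappa_0^2+\kappa_1^2)\,\tr_{(x,1)}\bigl(K^{XY}(\id\otimes\sigma_z)K^{XY}\bigr)=(\kappa_0^2+\kappa_1^2)\,\tr_{(x,1)}(\id\otimes\sigma_z)=0 .
\]
Therefore $T_{0z}=d\sigma_z$, and your hedge can be dropped.

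Your route buys economy: one two-line computation instead of re-expanding all sixteen terms, at the price of depending on the appendix. The direct expansion is self-contained. In it, the vanishing of the $\zeta_1X_2$ contribution to the $\sigma_z$-coefficient appears as a cancellation resting on $C^2-S^2=1$. The symmetric identity \eqref{Phih} cannot detect that cancellation.
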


\begin{proof}
This is a direct expansion of
\[
A_{x,(x,1),(x,2)}
\bigl(\id\otimes h_{(x,1)}\otimes h_{(x,2)}\bigr)
A_{x,(x,1),(x,2)}^*
\]
using \eqref{AI}, followed by the normalized partial trace over the two
successors. Since
\[
\Tr(\sigma_x)=\Tr(\sigma_y)=\Tr(\sigma_z)=0,\qquad \Tr(\id)=1,
\]
all mixed Pauli terms disappear under the partial trace except the ones shown in
\eqref{eq:full-map}. Comparing coefficients in the Pauli basis yields
\eqref{eq:full-recursion}.
\end{proof}

\begin{theorem}\label{thm:rigidity-full-boundary}
Let $J_I\in\mathbb R$, $J_{XY}\in\mathbb R$, and $\beta>0$. Assume that
$\{h_x\}_{x\in L}$ is a family of positive matrices satisfying the full boundary
equation \eqref{eq2}. Then
\[
\xi_x=\eta_x=\zeta_x=0
\qquad\text{for every }x\in L.
\]
Equivalently, every positive solution of \eqref{eq2} is scalar:
\begin{equation}\label{eq:scalar-form}
h_x=t_x\,\id,\qquad t_x>0.
\end{equation}
Moreover, the scalar coefficients satisfy
\begin{equation}\label{eq:scalar-recursion-full}
t_x=a\,t_{(x,1)}t_{(x,2)},
\qquad x\in L.
\end{equation}
\end{theorem}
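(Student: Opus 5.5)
The plan is to run, on the whole tree, the same normalized-ratio contraction that was used in Theorems \ref{thm:no-admissible-periodic} and \ref{thm:infinite-backward}, using the two-child recursion \eqref{eq:full-recursion} from Proposition \ref{prop:full-boundary-map}. Since each $h_x$ is positive, \eqref{eq:positivity-full} lets me define
\[
u_x=\frac{\xi_x}{X_x},\qquad v_x=\frac{\eta_x}{X_x},\qquad w_x=\frac{\zeta_x}{X_x},\qquad u_x^2+v_x^2+w_x^2\le 1 .
\]
The point is that along a suitable infinite descending path, each of these ratios is multiplied at every step by a factor bounded by a constant $\theta<1$. Hence it must vanish.

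\emph{Step 1: $\zeta\equiv0$.} Dividing the fourth equation of \eqref{eq:full-recursion} by the first and using $X_{(x,1)}X_{(x,2)}>0$, I get
\[
w_x=\frac{d\,w_{(x,2)}}{a-b\,w_{(x,1)}w_{(x,2)}} .
\]
The key inequality, and the main obstacle, is
\[
a-b\,st>|d|\qquad\text{for all } s,t\in[-1,1],
\]
which is a two-variable version of \eqref{eq:key-ineq}. It suffices to show $a-|b|>|d|$, and I split into two cases.
\begin{itemize}
\item If $d\ge0$, then $b=S^2d\ge0$, and $a-b=C^2+d>d$ exactly as before.
\item If $d<0$, then $b<0$, and using $\kappa_0^2+\kappa_1^2=1+d$ and $C^2=1+S^2$,
\[
a+b+d=C^2(1+d)+S^2d+d=C^2(1+2d)>0,
\]
since $d=\tfrac12(e^{2\beta J_I}-1)>-\tfrac12$.
\end{itemize}
This bound holds for all parameters, including $S=0$. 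By compactness, $\theta:=\max_{|s|,|t|\le1}|d|/(a-bst)<1$, so $|w_x|\le\theta|w_{(x,2)}|$. Iterating along the path $x,(x,2),(x,2,2),\dots$ gives $|w_x|\le\theta^N|w_{(x,2,\dots,2)}|\le\theta^N\to0$. Hence $\zeta_x=0$ for every $x$.

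\emph{Step 2: $\xi\equiv\eta\equiv0$.} With $\zeta\equiv0$, the first equation of \eqref{eq:full-recursion} becomes $X_x=aX_{(x,1)}X_{(x,2)}$. The second and third equations then give
\[
u_x=\frac{c}{a}\,u_{(x,1)},\qquad v_x=\frac{c}{a}\,v_{(x,1)},\qquad \Bigl|\frac ca\Bigr|=\frac{|S|}{C^2}<1 .
\]
Iterating along the path $x,(x,1),(x,1,1),\dots$ and using $|u|,|v|\le1$ forces $u_x=v_x=0$.

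\emph{Step 3: conclusion.} Now $h_x=X_x\id$ with $t_x:=X_x>0$, which is \eqref{eq:scalar-form}. The relation $X_x=aX_{(x,1)}X_{(x,2)}$ from Step 2 is exactly \eqref{eq:scalar-recursion-full}. The only delicate point is the uniform bound $a-b\,st>|d|$ in Step 1, since it must control the mixed product $w_{(x,1)}w_{(x,2)}$ of two different children rather than a single $u^2$. Everything else is a routine transcription of the proof of Theorem \ref{thm:infinite-backward} to paths in the tree.
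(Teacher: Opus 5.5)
Your proposal is correct and follows essentially the same route as the paper: the same normalized ratios $\xi_x/X_x,\ \eta_x/X_x,\ \zeta_x/X_x$, the same two-variable bound $a-bst>|d|$ on $[-1,1]^2$ (with the identical case split $a-b=C^2+d>d$ and $a+b+d=C^2(1+2d)>0$), and the same iterations along the rays $(x,2,2,\dots)$ for $\zeta$ and $(x,1,1,\dots)$ for $\xi,\eta$. One cosmetic point: for $d<0$ one only has $b=S^2d\le 0$ (equality when $J_{XY}=0$), not $b<0$; this does not affect the estimate, since $a-|b|=a+b$ still holds.
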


\begin{proof}
Define the normalized ratios
\[
p_x:=\frac{\xi_x}{X_x},
\qquad
q_x:=\frac{\eta_x}{X_x},
\qquad
r_x:=\frac{\zeta_x}{X_x}.
\]
By \eqref{eq:positivity-full},
\[
p_x^2+q_x^2+r_x^2\le 1,
\qquad x\in L.
\]
In particular,
\[
|p_x|\le 1,\qquad |q_x|\le 1,\qquad |r_x|\le 1.
\]

Dividing \eqref{eq:full-recursion} by $X_x>0$, we obtain
\begin{equation}\label{eq:ratio-recursion}
\left\{
\begin{aligned}
X_x&=X_{(x,1)}X_{(x,2)}\bigl(a-br_{(x,1)}r_{(x,2)}\bigr),\\[1mm]
p_x&=\frac{c}{a-br_{(x,1)}r_{(x,2)}}\,p_{(x,1)},\\[1mm]
q_x&=\frac{c}{a-br_{(x,1)}r_{(x,2)}}\,q_{(x,1)},\\[1mm]
r_x&=\frac{d}{a-br_{(x,1)}r_{(x,2)}}\,r_{(x,2)}.
\end{aligned}
\right.
\end{equation}

We first prove that $r_x=0$ for every $x$. Set
\[
D(s,t):=a-bst,
\qquad (s,t)\in[-1,1]^2.
\]
We claim that
\begin{equation}\label{eq:D-greater-d}
D(s,t)>|d|
\qquad\text{for every }(s,t)\in[-1,1]^2.
\end{equation}

Indeed, if $d\ge 0$, then $b=S^2d\ge 0$, so
\[
D(s,t)\ge a-b
=
C^2(\kappa_0^2+\kappa_1^2)-2S^2\kappa_0\kappa_1
=
C^2+d
>
d=|d|.
\]
If $d<0$, then $b<0$, and therefore
\[
D(s,t)\ge a+b.
\]
Since
\[
d=2\kappa_0\kappa_1=\frac{e^{2\beta J_I}-1}{2}>-\frac12,
\]
we compute
\[
a+b-|d|
=
a+b+d
=
C^2(\kappa_0^2+\kappa_1^2)+2S^2\kappa_0\kappa_1+2\kappa_0\kappa_1
=
C^2(1+2d)>0.
\]
Hence again $D(s,t)>|d|$. This proves \eqref{eq:D-greater-d}.

Since $D$ is continuous on the compact square $[-1,1]^2$, \eqref{eq:D-greater-d}
implies the existence of a constant $\theta\in(0,1)$ such that
\[
\frac{|d|}{D(s,t)}\le \theta
\qquad\text{for all }(s,t)\in[-1,1]^2.
\]
Using the fourth equation in \eqref{eq:ratio-recursion}, we obtain
\[
|r_x|
\le
\theta\,|r_{(x,2)}|,
\qquad x\in L.
\]
Iterating this estimate along the infinite ray
\[
x,\ (x,2),\ (x,2,2),\ \dots,\ (x,\underbrace{2,\dots,2}_{n\text{ times}}),\dots,
\]
gives
\[
|r_x|
\le
\theta^n \bigl|r_{(x,\underbrace{2,\dots,2}_{n\text{ times}})}\bigr|
\le
\theta^n,
\qquad n\ge 1.
\]
Letting $n\to\infty$, we conclude that $r_x=0$. Therefore
\[
\zeta_x=0
\qquad\text{for every }x\in L.
\]

With $r_x=0$, the second and third equations in \eqref{eq:ratio-recursion}
reduce to
\[
p_x=\frac{c}{a}\,p_{(x,1)},
\qquad
q_x=\frac{c}{a}\,q_{(x,1)}.
\]
Now
\[
\left|\frac{c}{a}\right|
=
\frac{|S|}{C^2}
<1.
\]
Hence there exists $\lambda\in(0,1)$ such that
\[
|p_x|\le \lambda |p_{(x,1)}|,
\qquad
|q_x|\le \lambda |q_{(x,1)}|,
\qquad x\in L.
\]
Iterating along the infinite ray
\[
x,\ (x,1),\ (x,1,1),\ \dots,\ (x,\underbrace{1,\dots,1}_{n\text{ times}}),\dots,
\]
yields
\[
|p_x|
\le
\lambda^n,
\qquad
|q_x|
\le
\lambda^n,
\qquad n\ge 1.
\]
Letting $n\to\infty$, we obtain
\[
p_x=q_x=0
\qquad\text{for every }x\in L.
\]
Therefore
\[
\xi_x=\eta_x=0
\qquad\text{for every }x\in L.
\]

Thus every $h_x$ is scalar:
\[
h_x=X_x\,\id.
\]
Renaming $t_x:=X_x$, the first equation in \eqref{eq:full-recursion} becomes
\[
t_x=a\,t_{(x,1)}t_{(x,2)},
\]
which is exactly \eqref{eq:scalar-recursion-full}.
\end{proof}

\begin{corollary}\label{cor:many-scalar-boundary-laws}
There exist infinitely many non-translation-invariant positive boundary laws,
but all of them are scalar. Equivalently, if one sets
\[
u_x:=a\,t_x,
\]
then \eqref{eq:scalar-recursion-full} becomes
\[
u_x=u_{(x,1)}u_{(x,2)}.
\]
Hence, given any $u_0>0$ and any family of positive numbers
\(\{q_x\}_{x\in L}\), the recursive prescription
\begin{equation}\label{eq:q-parametrization}
u_{(x,1)}:=q_x,
\qquad
u_{(x,2)}:=\frac{u_x}{q_x},
\qquad x\in L,
\end{equation}
defines a positive solution of \(u_x=u_{(x,1)}u_{(x,2)}\), and therefore
\[
h_x=\frac{u_x}{a}\,\id
\]
solves \eqref{eq2}. If the parameters \(q_x\) are not constant, then this
boundary law is not translation-invariant.
\end{corollary}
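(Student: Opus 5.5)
The plan is to reduce everything to Theorem \ref{thm:rigidity-full-boundary}, which already tells us that every positive boundary law is scalar, $h_x=t_x\id$, with $t_x=a\,t_{(x,1)}t_{(x,2)}$. The substitution $u_x:=a t_x$ turns this into $u_x=u_{(x,1)}u_{(x,2)}$, since $a u_x/a=a\cdot(u_{(x,1)}/a)(u_{(x,2)}/a)\cdot a$. Positivity of $t_x$ is equivalent to positivity of $u_x$, because $a=C^2(\kappa_0^2+\kappa_1^2)>0$. So the "all of them are scalar" part is immediate, and what remains is to construct the solutions and show that they are non-translation-invariant.

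For the construction, I will define $u$ level by level. Fix $u_0>0$. Given $u_x>0$ already defined at a vertex of level $n$, set $u_{(x,1)}=q_x$ and $u_{(x,2)}=u_x/q_x$. Both values are positive, and by construction $u_{(x,1)}u_{(x,2)}=u_x$. Since every vertex other than the root is either $(x,1)$ or $(x,2)$ for a unique parent $x$, induction on the level shows that $u$ is well defined and positive on all of $L$, and that it satisfies the recursion at every $x$.

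Next I will invoke Corollary \ref{cor:radial-scalar-boundary}'s computation $\tr_{S(x)}(A(\id\otimes s\id\otimes t\id)A^*)=ast\,\id$. With $h_x=(u_x/a)\id$, this gives
\[
\mathcal E_x\bigl(\id\otimes h_{(x,1)}\otimes h_{(x,2)}\bigr)=a\,\frac{u_{(x,1)}}{a}\frac{u_{(x,2)}}{a}\id=\frac{u_x}{a}\id=h_x,
\]
which is exactly \eqref{eq2}.

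For non-translation-invariance, suppose $h_x\equiv\mathbf h$. Then $u_x$ is a constant $u$ with $u=u^2$, so $u=1$, which forces $q_x=1$ for all $x$. Hence any nonconstant choice of $q$ breaks translation invariance. Strictly speaking, a constant $q\neq 1$ already does so as well, so the statement holds a fortiori. To obtain infinitely many such laws, it suffices to let $q$ range over, e.g., $q_0\in(0,\infty)\setminus\{1\}$ with all other $q_x=1$; this gives distinct families.

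There is no real obstacle here. The only point needing care is checking that the parametrization is well defined through the unique-parent structure. It is also worth remarking, via Proposition \ref{prop:scalar-gauge}, that these laws all generate $\varphi_*$.
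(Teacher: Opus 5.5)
Your proposal is correct and follows essentially the intended route: scalarity comes from Theorem~\ref{thm:rigidity-full-boundary}, and the substitution $u_x=a\,t_x$ (with $a>0$) gives $u_x=u_{(x,1)}u_{(x,2)}$. The prescription \eqref{eq:q-parametrization} is well defined by the unique-parent structure, and \eqref{eq2} follows from the scalar identity $\tr_{S(x)}\bigl(A_{x,(x,1),(x,2)}(\id\otimes s\id\otimes t\id)A_{x,(x,1),(x,2)}^*\bigr)=ast\,\id$, equivalently the first line of \eqref{eq:full-recursion} with vanishing Bloch vectors. Your observation that translation invariance forces $u\equiv 1$, hence $u_0=1$ and $q\equiv 1$, settles both the non-invariance claim and the existence of infinitely many such laws.
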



\begin{proposition}\label{prop:gauge-equivalence-full}
Let \(\{t_x\}_{x\in L}\) be any positive scalar solution of
\eqref{eq:scalar-recursion-full}, and let \(\omega_0\in M_2(\mathbb C)_+\)
satisfy
\[
\Tr(\omega_0 h_0)=1
\qquad\Longleftrightarrow\qquad
t_0\,\Tr(\omega_0)=1.
\]
Put
\[
\rho_0:=t_0\,\omega_0,
\qquad \Tr(\rho_0)=1,
\]
and denote
\[
L_n:=K_{[0,1]}K_{[1,2]}\cdots K_{[n-1,n]}.
\]
Then
\begin{equation}\label{eq:Wn-rho0}
\mathcal W_{n]}
=
a^{-(2^n-1)}\,L_n^*\rho_0 L_n,
\qquad n\ge 1,
\end{equation}
where \(\rho_0\) acts on the root tensor factor and as the identity on the
remaining tensor factors. In particular, the resulting quantum Markov chain
depends only on \(\rho_0\), and not on the particular scalar family
\(\{t_x\}\).

Consequently, solving the full two-child boundary equation with positive
one-site matrices does \emph{not} produce genuinely new QMCs through
non-identical positive matrices: every positive solution is scalar, and the
non-translation-invariant scalar boundary laws are only a gauge freedom.
\end{proposition}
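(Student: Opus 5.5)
The plan is to repeat the computation in Proposition \ref{prop:scalar-gauge}, this time keeping $\omega_0$ as a general positive matrix instead of a scalar. By Theorem \ref{thm:rigidity-full-boundary} every $h_x$ equals $t_x\id$, so
\[
\mathbf h_n^{1/2}=\Bigl(\prod_{x\in W_n}t_x\Bigr)^{1/2}\id
\]
is a scalar. It can therefore be pulled out of $\mathbf K_n=\omega_0^{1/2}L_n\mathbf h_n^{1/2}$, which gives
\[
\mathbf K_n=\Bigl(\prod_{x\in W_n}t_x\Bigr)^{1/2}\omega_0^{1/2}L_n .
\]
Here $\omega_0^{1/2}$ acts on the root factor, tensored with the identity elsewhere. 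Hence
\[
\mathcal W_{n]}=\mathbf K_n^*\mathbf K_n=\Bigl(\prod_{x\in W_n}t_x\Bigr)L_n^*\,\omega_0\,L_n .
\]
Note that $\omega_0$ now sits between $L_n^*$ and $L_n$, since it need not commute with the interaction operators.

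Next, I multiply the scalar recursion \eqref{eq:scalar-recursion-full} over all $x\in W_m$. Each vertex of $W_{m+1}$ is the successor of exactly one vertex of $W_m$, so
\[
\prod_{x\in W_m}t_x=a^{2^m}\prod_{y\in W_{m+1}}t_y .
\]
Telescoping from $m=0$ to $m=n-1$ gives
\[
t_0=a^{2^n-1}\prod_{x\in W_n}t_x .
\]
Substituting $\prod_{x\in W_n}t_x=t_0\,a^{-(2^n-1)}$ and writing $t_0\omega_0=\rho_0$ yields \eqref{eq:Wn-rho0} directly. The normalization $\Tr(\rho_0)=t_0\Tr(\omega_0)=1$ is just \eqref{eq1} with $h_0=t_0\id$.

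The consequences then follow. The finite-volume functionals $\ffi^{(n)}_{\omega_0,\bh}(a)=\tr(\mathcal W_{n+1]}(a\otimes\id))$ are expressed entirely through $L_{n+1}$, $a$ and $\rho_0$. So the weak-$*$ limit $\ffi_{\omega_0,\bh}$ from Theorem \ref{compa} depends only on $\rho_0$, not on the family $\{t_x\}$. Combined with Theorem \ref{thm:rigidity-full-boundary}, which says every positive solution is scalar, this gives the final assertion: non-translation-invariant solutions change only the scalars $t_x$, which cancel out. In particular, choosing $t_x\equiv a^{-1}$ with the same $\rho_0$ reproduces the same state. When $\rho_0=\id$ this is $\varphi_*$, in agreement with Proposition \ref{prop:scalar-gauge}.

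I do not expect a real obstacle; the argument is bookkeeping. The one point needing care is ordering: $\omega_0^{1/2}$ must stay on the left of $L_n$, inside the sandwich $L_n^*\omega_0L_n$, and is not to be commuted past $L_n$. Only the scalar $\mathbf h_n^{1/2}$ commutes with everything. The other check is the exponent count $1+2+\cdots+2^{n-1}=2^n-1$, which uses $|W_m|=2^m$.
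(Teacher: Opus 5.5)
Your proposal is correct and follows the paper's proof essentially step for step. You pull the scalar $\mathbf h_n^{1/2}$ out of $\mathbf K_n$ to get $\mathcal W_{n]}=\bigl(\prod_{x\in W_n}t_x\bigr)L_n^*\omega_0L_n$, then telescope the recursion level by level to $\prod_{x\in W_n}t_x=t_0a^{-(2^n-1)}$ and absorb $t_0$ into $\rho_0$; the dependence on $\rho_0$ alone and the gauge-freedom conclusion then follow, as in the paper, from this formula together with Theorem \ref{thm:rigidity-full-boundary}.
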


\begin{proof}
Since \(h_x=t_x\id\), one has
\[
\mathbf h_n^{1/2}
=
\left(\prod_{x\in W_n} t_x\right)^{1/2}\id.
\]
Therefore
\[
\mathbf K_n
=
\omega_0^{1/2}L_n\mathbf h_n^{1/2}
=
\left(\prod_{x\in W_n} t_x\right)^{1/2}\omega_0^{1/2}L_n,
\]
and hence
\begin{equation}\label{eq:Wn-before}
\mathcal W_{n]}
=
\left(\prod_{x\in W_n} t_x\right)L_n^*\omega_0 L_n.
\end{equation}

Now multiply \eqref{eq:scalar-recursion-full} over all vertices \(x\in W_m\):
\[
\prod_{x\in W_m} t_x
=
a^{|W_m|}\prod_{y\in W_{m+1}} t_y.
\]
Iterating this identity from \(m=0\) to \(m=n-1\), we obtain
\[
t_0
=
a^{1+2+\cdots+2^{n-1}}
\prod_{x\in W_n} t_x
=
a^{2^n-1}\prod_{x\in W_n} t_x.
\]
Hence
\[
\prod_{x\in W_n} t_x
=
t_0\,a^{-(2^n-1)}.
\]
Substituting this into \eqref{eq:Wn-before} gives
\[
\mathcal W_{n]}
=
a^{-(2^n-1)}L_n^*(t_0\omega_0)L_n
=
a^{-(2^n-1)}L_n^*\rho_0 L_n,
\]
which is exactly \eqref{eq:Wn-rho0}. Thus the state depends only on \(\rho_0\).
\end{proof}

\section{Quantum-information interpretation: branching transfer of coherence and population}\label{subsec:branching-transport}

The compatibility equations reduce to an explicit nonlinear recursion on
a tree, the model furnishes a natural test case for noncommutative
message-passing schemes. In particular, it may be used as a benchmark for
quantum belief propagation and related inference procedures on tree-structured
quantum graphical models \cite{LeiferPoulin2008}. In this sense, the closed
recursion derived here can serve as a reference solution against which
approximate schemes on truncated trees, disordered trees, or graphs with loops
may be compared.

The present model is formulated as a translation-invariant quantum
Markov chain on a Cayley tree, the reduced recursion \eqref{TH} also admits a
natural interpretation as an \emph{effective generation-to-generation transfer
law} for one-qubit polarization data on a branching network. This makes the
model a simple solvable benchmark for root-to-generation distinguishability on a
binary tree.

To make this precise, write the boundary matrix in Bloch form
\[
h=X\id+x\sigma_x+y\sigma_y+z\sigma_z,
\qquad X>0,
\qquad x^2+y^2+z^2\le X^2,
\]
and introduce the normalized polarization variables
\[
u=\frac{x}{X},\qquad v=\frac{y}{X},\qquad w=\frac{z}{X}.
\]
Then \(u,v,w\) represent the normalized Bloch components carried by the
effective one-site boundary datum. Dividing the system \eqref{TH} by the first
equation, one obtains the induced nonlinear map
\begin{equation}\label{eq:uvw-map}
u'=\frac{c}{a-bw^2}\,u,
\qquad
v'=\frac{c}{a-bw^2}\,v,
\qquad
w'=\frac{d}{a-bw^2}\,w,
\end{equation}
where
\[
a=C^2(\kappa_0^2+\kappa_1^2),\qquad
b=2S^2\kappa_0\kappa_1,\qquad
c=S(\kappa_0^2+\kappa_1^2),\qquad
d=2\kappa_0\kappa_1.
\]

Now consider the unique translation-invariant solution
\[
h_*=\frac{1}{C^2(\kappa_0^2+\kappa_1^2)}\,\id.
\]
In the variables \((u,v,w)\), this fixed point is simply
\[
(u,v,w)=(0,0,0).
\]
Linearizing \eqref{eq:uvw-map} at \((0,0,0)\), we get
\begin{equation}\label{eq:linear-uvw}
u'=\mu_\perp u,
\qquad
v'=\mu_\perp v,
\qquad
w'=\mu_z w,
\end{equation}
with the explicit coefficients
\begin{equation}\label{eq:transport-coefficients}
\mu_\perp=\frac{c}{a}
=\frac{S}{C^2}
=\frac{\sinh(\beta J_{XY})}{\cosh^2(\beta J_{XY})},
\qquad
\mu_z=\frac{d}{a}
=\frac{2\kappa_0\kappa_1}{C^2(\kappa_0^2+\kappa_1^2)}
=\frac{\tanh(\beta J_I)}{\cosh^2(\beta J_{XY})}.
\end{equation}

Formula \eqref{eq:linear-uvw} gives a concrete operational interpretation of
the model. The transverse variables \(u,v\) encode \(X\)- and \(Y\)-type phase
coherence, while \(w\) encodes the longitudinal \(Z\)-population bias.
Therefore:
\begin{itemize}
\item \(\mu_\perp\) measures the effective survival of transverse coherence
under one generation of the tree recursion;
\item \(\mu_z\) measures the effective survival of longitudinal
population imbalance under one generation of the tree recursion.
\end{itemize}

Equivalently, if a small signal is injected at the root, then in the linear
regime one has
\begin{equation}\label{eq:depth-n-signal}
u_n\approx \mu_\perp^{\,n}u_0,
\qquad
v_n\approx \mu_\perp^{\,n}v_0,
\qquad
w_n\approx \mu_z^{\,n}w_0.
\end{equation}
Hence the model separates, in closed form, two distinct communication tasks on
a branching architecture:
\begin{enumerate}
\item transmission of phase coherence, governed by \(\mu_\perp\);
\item transmission of \(Z\)-encoded classical information, governed by
\(\mu_z\).
\end{enumerate}

This is useful as a toy model for branching quantum communication. For
instance, if the root encodes a classical bit in the computational basis, then
the relevant effective signal is \(w_0\), and its generation-by-generation
survival is controlled by \(|\mu_z|\). If instead the root encodes phase
information in the states \(|+\rangle,|-\rangle\), then the relevant signal is
\(u_0\) (or \(v_0\)), and its survival is controlled by \(|\mu_\perp|\).
Thus the mixed Ising--\(XY\) model provides an analytically tractable way to
compare \emph{classical population transport} with \emph{quantum coherence
transport} on the same branching geometry.

The explicit formulas \eqref{eq:transport-coefficients} also give immediate
qualitative information. First,
\[
|\mu_\perp|
=\frac{|\sinh(\beta J_{XY})|}{\cosh^2(\beta J_{XY})}
\le \frac12,
\]
and the maximum value \(1/2\) is attained when
\[
|\sinh(\beta J_{XY})|=1.
\]
Therefore increasing the \(XY\) coupling does not improve transverse coherence
transport indefinitely: there is an optimal intermediate regime, after which the
effective transmitted coherence decreases. Second,
\[
|\mu_z|
=\frac{|\tanh(\beta J_I)|}{\cosh^2(\beta J_{XY})}
\le \frac{1}{\cosh^2(\beta J_{XY})},
\]
so stronger Ising coupling enhances the effective longitudinal memory channel,
whereas stronger \(XY\) coupling suppresses it. In particular, the model
predicts that the same branching device may favor preservation of \(Z\)-bias or
preservation of transverse coherence, depending on the balance between
\(J_I\) and \(J_{XY}\).

Finally, since the number of descendants in the \(n\)-th generation is
\[
|W_n|=2^n,
\]
the model is a convenient solvable baseline for questions of the following
type: can the exponential growth of available readout sites compensate the
generation-by-generation attenuation in \eqref{eq:depth-n-signal}? This is
precisely the kind of competition that appears in reconstruction problems on
noisy quantum trees and in binary-tree spin-transfer architectures. In this
sense, the present static mixed Ising--\(XY\) model may be viewed as an
analytically tractable equilibrium/message-passing analogue of branching
transfer protocols, in which coherent exchange and diagonal conditioning can be
monitored separately and explicitly.

\section{Local two-site entanglement on the natural three-site cluster}\label{subsec:local-entanglement}

Motivated by the use of local reduced density matrices in spin-chain
entanglement studies, we analyze the natural three-site cluster of the Cayley
tree, namely a parent together with its two direct successors.

By translation invariance it is enough to work at the root and to write
\[
\Lambda_1=\{0,1,2\},\qquad 1=(0,1),\quad 2=(0,2).
\]
Recall that
\[
S=\sinh(\beta J_{XY}),\qquad C=\cosh(\beta J_{XY}),\qquad
\kappa_0=\frac{e^{\beta J_I}+1}{2},\qquad
\kappa_1=\frac{e^{\beta J_I}-1}{2},
\]
and
\[
\alpha=\frac{1}{C^2(\kappa_0^2+\kappa_1^2)}.
\]
The unique translation-invariant boundary condition is
\[
h_x=\alpha\,\id,\qquad x\in L.
\]

For \(n\ge 1\), let \(\varphi_*^{(n)}\) be the corresponding finite-volume state
on \(\mathcal B_{\Lambda_n}\), and let \(\rho_{\Lambda_1}^{(n)}\) denote the
density of its restriction to \(\mathcal B_{\Lambda_1}\) with respect to the
normalized trace. By compatibility of the family
\((\varphi_*^{(n)})_{n\ge 1}\), the operator \(\rho_{\Lambda_1}^{(n)}\) does
not depend on \(n\). We therefore write
\[
\rho_{\Lambda_1}:=\rho_{\Lambda_1}^{(n)},\qquad n\ge 1.
\]

\begin{proposition}\label{prop:three-site-density}
The three-site density \(\rho_{\Lambda_1}\) is given by
\begin{equation}\label{eq:rho-three-site}
\begin{aligned}
\rho_{\Lambda_1}
&=\id\otimes\id\otimes\id \\
&\quad
+p\bigl(\sigma_x\otimes\sigma_x\otimes\id
+\sigma_y\otimes\sigma_y\otimes\id\bigr) \\
&\quad
+q\,\sigma_z\otimes\id\otimes\sigma_z
+r\,\sigma_z\otimes\sigma_z\otimes\id
+s\,\id\otimes\sigma_z\otimes\sigma_z,
\end{aligned}
\end{equation}
where
\begin{align}
p&=\frac{S(\kappa_0^2-\kappa_1^2)^2}{C^2(\kappa_0^2+\kappa_1^2)^2}
=\frac{\sinh(\beta J_{XY})}
       {\cosh^2(\beta J_{XY})\,\cosh^2(\beta J_I)}, \nonumber\\
q&=\frac{2\kappa_0\kappa_1}{C^2(\kappa_0^2+\kappa_1^2)}
=\frac{\tanh(\beta J_I)}{\cosh^2(\beta J_{XY})}, \nonumber\\
r&=-\frac{S^2}{C^4}
=-\frac{\sinh^2(\beta J_{XY})}{\cosh^4(\beta J_{XY})}, \label{eq:pqrs-ent}\\
s&=-\frac{2S^2\kappa_0\kappa_1}{C^6(\kappa_0^2+\kappa_1^2)}
=-\frac{\sinh^2(\beta J_{XY})\,\tanh(\beta J_I)}
        {\cosh^6(\beta J_{XY})}. \nonumber
\end{align}
In particular, the three-site reduced density is independent of the tree depth.
\end{proposition}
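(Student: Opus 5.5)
The plan is to compute $\rho_{\Lambda_1}$ directly from the finite-volume operator $\mathcal W_{2]}$, using that the boundary data are scalar. We take $\omega_0=\alpha^{-1}\id$ and $h_x=\alpha\id$ as in Corollary~\ref{cor:qmc}. For $a\in\mathcal B_{\Lambda_1}$ we have $\varphi_*(a)=\tr(\mathcal W_{2]}a)$, so $\rho_{\Lambda_1}=\tr_{\Lambda_1}(\mathcal W_{2]})$.

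\textbf{Step 1: reduce to a single partial trace.} Since $\omega_0$ and $\mathbf h_2$ are scalars, and all $K$'s are Hermitian (exponentials of Hermitian operators), we get
\[
\mathcal W_{2]}=\alpha^{-1}\alpha^{4}\,K_{[1,2]}K_{[0,1]}^2K_{[1,2]}.
\]
Here $K_{[0,1]}^2=A_{0,1,2}^2$ lives on $\Lambda_1$, and $K_{[1,2]}=A_{1,(1,1),(1,2)}A_{2,(2,1),(2,2)}$; the two factors commute because they have disjoint supports. We cannot simply cycle $K_{[1,2]}$ under $\tr_{W_2}$, since it also acts on $W_1$. Instead we introduce the one-site map
\[
\Phi(Y):=\tr_{S(x)}\bigl(A_{x,(x,1),(x,2)}(Y\otimes\id\otimes\id)A_{x,(x,1),(x,2)}\bigr),\qquad Y\in M_2 .
\]
Then $\rho_{\Lambda_1}=\alpha^{3}(\id\otimes\Phi\otimes\Phi)(A_{0,1,2}^2)$, where the two copies of $\Phi$ act on sites $1$ and $2$.

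\textbf{Step 2: compute $\Phi$ on the Pauli basis.} From the explicit form \eqref{AI}, I expect $\Phi$ to be diagonal in the Pauli basis. We already know $\Phi(\id)=a\id$, which is the computation behind Corollary~\ref{cor:radial-scalar-boundary}. For the other basis elements, conjugation by each Pauli monomial of $A$ multiplies $\sigma_z$ (resp.\ $\sigma_x,\sigma_y$) by $\pm1$ at the parent. After the normalized trace over the children, only the diagonal pairs of monomials survive. This should give $\Phi(\sigma_z)=\lambda_z\sigma_z$ and $\Phi(\sigma_{x,y})=\lambda_\perp\sigma_{x,y}$, with $\lambda_z,\lambda_\perp$ obtained as signed sums of squares of the coefficients in \eqref{AI}. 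Cross terms such as $i\sigma_x\otimes\sigma_y\otimes\sigma_z$ against $\sigma_y\otimes\sigma_x\otimes\id$ leave a traceless factor on the children and so drop out.

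\textbf{Step 3: expand $A^2$ and assemble.} Next we expand $A_{0,1,2}^2$ in the Pauli basis on $\{0,1,2\}$. Since $A=K^{XY}K^I$, it is convenient to use $A^2=(K^{XY})^2(K^I)^2$, which holds because $\sigma_z\otimes\sigma_z$ at $(0,1)$ commutes with $K^{XY}$. Then $(K^{XY})^2$ and $(K^I)^2$ are given by \eqref{KXY} and \eqref{KI} with $J$ replaced by $2J$. Multiplying out gives the terms $\id\otimes\id\otimes\id$, $\sigma_x\sigma_x\id+\sigma_y\sigma_y\id$, $\sigma_z\id\sigma_z$, $\sigma_z\sigma_z\id$ and $\id\sigma_z\sigma_z$, together with terms of the form $\sigma_x\sigma_y\sigma_z-\sigma_y\sigma_x\sigma_z$. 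I expect the latter to cancel or to be anti-Hermitian and cancel by Hermiticity of $A^2$. Finally we apply $\Phi\otimes\Phi$ on sites $1,2$, multiply by $\alpha^3$, and simplify with $\kappa_0-\kappa_1=1$, $\kappa_0^2+\kappa_1^2=\cosh(\beta J_I)e^{\beta J_I}$ and $2\kappa_0\kappa_1=\sinh(\beta J_I)e^{\beta J_I}$ to reach \eqref{eq:pqrs-ent}. The identity coefficient equal to $1$ serves as a consistency check, since it follows from $\Phi(\id)=a\id$ and $\alpha=1/a$.

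Depth independence is immediate from the compatibility in Theorem~\ref{compa}. The main obstacle is the bookkeeping in Steps 2--3: checking the diagonality of $\Phi$ and handling the sign and phase cancellations of the mixed $i\sigma_x\sigma_y\sigma_z$ terms. I would first verify $\lambda_z,\lambda_\perp$ in the special cases $S=0$ and $\kappa_1=0$ before carrying out the general expansion.
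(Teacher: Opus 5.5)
There is a genuine gap, and it is in Step 1. The operators $K^{XY}_{\langle 0,1\rangle}$ and $K^{I}_{\langle 0,2\rangle}$ are each Hermitian. However, they share the site $0$ and do not commute: $\sigma_z^{(0)}$ anticommutes with $\sigma_x^{(0)}\sigma_x^{(1)}+\sigma_y^{(0)}\sigma_y^{(1)}$, and therefore so does $\sigma_z^{(0)}\sigma_z^{(2)}$. Consequently $A_{0,1,2}=K^{XY}K^{I}$ is not Hermitian. You can see this directly in \eqref{AI}: its last line $\frac{S\kappa_1}{2}\bigl(i\sigma_x\otimes\sigma_y\otimes\sigma_z-i\sigma_y\otimes\sigma_x\otimes\sigma_z\bigr)$ is anti-Hermitian.

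The correct finite-volume operator is therefore $\mathcal W_{2]}=\alpha^3K_{[1,2]}^*A_0^*A_0K_{[1,2]}$, not $\alpha^3K_{[1,2]}A_0^2K_{[1,2]}$. Likewise, the one-site map must be $\Psi(m)=\tr_{S(x)}\bigl(A^*(m\otimes\id\otimes\id)A\bigr)$, not your $\Phi(Y)=\tr_{S(x)}\bigl(A(Y\otimes\id\otimes\id)A\bigr)$. Your justification in Step 3 has the same defect. It is $\sigma_z^{(0)}\sigma_z^{(1)}$ that commutes with $K^{XY}$. The operator appearing in $K^I$ is $\sigma_z^{(0)}\sigma_z^{(2)}$, which acts on the $XY$ edge through $\sigma_z^{(0)}$ alone. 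Hence $A^2\neq (K^{XY})^2(K^I)^2$.

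This is not cosmetic: carried out as written, your computation gives the wrong answer.
\begin{itemize}
\item One finds $\tr_{S(x)}(A^2)=(a-S^2\kappa_1^2)\id$. Corollary \ref{cor:radial-scalar-boundary} concerns $AA^*$, so it does not give you $\Phi(\id)=a\id$.
\item Your identity coefficient then becomes $\alpha^3(a-S^2\kappa_1^2)^3<1$ whenever $S\kappa_1\neq0$, so the normalization check fails.
\item The terms $CS\kappa_0\kappa_1\, i(\sigma_x\otimes\sigma_y-\sigma_y\otimes\sigma_x)\otimes\sigma_z$ in $A^2$ do not cancel.
\item Even under your factorization, the coefficient of $\sigma_x\otimes\sigma_x\otimes\id$ would be $SC(\kappa_0^2+\kappa_1^2)$ instead of $SC(\kappa_0^2-\kappa_1^2)$. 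That factor $\kappa_0^2-\kappa_1^2$ is exactly where the $1/\cosh(\beta J_I)$ in $p$ comes from.
\end{itemize}

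The repair is to work with $A_0^*A_0=K^I(K^{XY})^2K^I$. Conjugation by $K^I=\kappa_0+\kappa_1\sigma_z^{(0)}\sigma_z^{(2)}$ multiplies the part of $(K^{XY})^2$ that commutes with $\sigma_z^{(0)}$ by $(K^I)^2$. It multiplies the anticommuting part by $\kappa_0^2-\kappa_1^2$. This gives \eqref{eq:AstarA-ent}.

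For the one-site map, use $K^{XY}\sigma_z^{(0)}K^{XY}=\sigma_z^{(0)}$ and $K^{XY}\sigma_x^{(0)}K^{XY}=C\sigma_x^{(0)}+S\sigma_x^{(1)}$. These yield $\Psi(\id)=a\id$, $\Psi(\sigma_z)=(\kappa_0^2+\kappa_1^2)\sigma_z$, and $\Psi(\sigma_x)=C(\kappa_0^2-\kappa_1^2)\sigma_x$, and likewise for $\sigma_y$. With these corrections your outline becomes exactly the paper's proof, with $\mathcal T=\alpha\Psi$ and $\rho_{\Lambda_1}=\alpha(\id\otimes\mathcal T\otimes\mathcal T)(A_0^*A_0)$.
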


\begin{proof}
Because the finite-volume family is compatible, it is enough to compute the
restriction to \(\Lambda_1\) from the first nontrivial extension, namely from
\(\Lambda_2\).
Write
\[
A_0:=A_{0,1,2},\qquad A_1:=A_{1,3,4},\qquad A_2:=A_{2,5,6}.
\]
Since \(h_x=\alpha\id\) for every boundary site, the corresponding density on
\(\Lambda_2\) is
\[
\mathcal W_{2]}
=\alpha^3 A_2^*A_1^*A_0^*A_0A_1A_2.
\]
Hence
\[
\rho_{\Lambda_1}
=\tr_{\{3,4,5,6\}}(\mathcal W_{2]}).
\]

Introduce the completely positive map
\[
\mathcal T(m)
:=
\alpha\,\tr_{(u,1),(u,2)}
\Bigl(
A_{u,(u,1),(u,2)}^*
(m\otimes \id\otimes \id)
A_{u,(u,1),(u,2)}
\Bigr),
\qquad m\in M_2(\mathbb C).
\]
A direct computation in the Pauli basis gives
\[
\mathcal T(\id)=\id,\qquad
\mathcal T(\sigma_x)=\frac{1}{C\cosh(\beta J_I)}\,\sigma_x,
\]
\[
\mathcal T(\sigma_y)=\frac{1}{C\cosh(\beta J_I)}\,\sigma_y,\qquad
\mathcal T(\sigma_z)=\frac{1}{C^2}\,\sigma_z.
\]
Using the explicit formula for \(A_0\), one also finds
\begin{equation}\label{eq:AstarA-ent}
\begin{aligned}
A_0^*A_0
&=C^2(\kappa_0^2+\kappa_1^2)\,\id\otimes\id\otimes\id \\
&\quad
+CS(\kappa_0^2-\kappa_1^2)
\bigl(\sigma_x\otimes\sigma_x\otimes\id
+\sigma_y\otimes\sigma_y\otimes\id\bigr) \\
&\quad
+2C^2\kappa_0\kappa_1\,\sigma_z\otimes\id\otimes\sigma_z
-S^2(\kappa_0^2+\kappa_1^2)\,\sigma_z\otimes\sigma_z\otimes\id \\
&\quad
-2S^2\kappa_0\kappa_1\,\id\otimes\sigma_z\otimes\sigma_z.
\end{aligned}
\end{equation}
Therefore
\[
\rho_{\Lambda_1}
=
\alpha\,(\id\otimes\mathcal T\otimes\mathcal T)(A_0^*A_0),
\]
and substitution of \eqref{eq:AstarA-ent} yields \eqref{eq:rho-three-site} and
\eqref{eq:pqrs-ent}.
\end{proof}

Tracing out one site gives the two-site densities
\begin{equation}\label{eq:two-site-densities}
\begin{aligned}
\rho_{01}
&=\tr_2(\rho_{\Lambda_1})
=\id\otimes\id
+p(\sigma_x\otimes\sigma_x+\sigma_y\otimes\sigma_y)
+r\,\sigma_z\otimes\sigma_z,\\
\rho_{02}
&=\tr_1(\rho_{\Lambda_1})
=\id\otimes\id+q\,\sigma_z\otimes\sigma_z,\\
\rho_{12}
&=\tr_0(\rho_{\Lambda_1})
=\id\otimes\id+s\,\sigma_z\otimes\sigma_z.
\end{aligned}
\end{equation}
The corresponding usual density matrices are
\[
\widehat\rho_{ij}=\frac14\,\rho_{ij}.
\]

\begin{theorem}\label{thm:pairwise-entanglement}
For the three two-site marginals obtained from the natural cluster
\(\Lambda_1=\{0,1,2\}\), the following statements hold.

\smallskip
\noindent{\rm (i)} The \(XY\)-edge marginal has the matrix form
\[
\widehat\rho_{01}
=
\frac14
\begin{pmatrix}
1+r & 0 & 0 & 0\\
0 & 1-r & 2p & 0\\
0 & 2p & 1-r & 0\\
0 & 0 & 0 & 1+r
\end{pmatrix},
\]
in the computational basis
\(\{|00\rangle,|01\rangle,|10\rangle,|11\rangle\}\).
Its concurrence is
\begin{equation}\label{eq:C01}
C_{01}
=
\max\Bigl\{0,\,
|p|-\frac{1+r}{2}
\Bigr\},
\end{equation}
and its negativity is
\begin{equation}\label{eq:N01}
\mathcal N_{01}
=
\max\Bigl\{0,\,
\frac{2|p|-1-r}{4}
\Bigr\}.
\end{equation}
Consequently, the entanglement of formation is
\begin{equation}\label{eq:EF01}
E_F(0,1)
=
h\!\left(\frac{1+\sqrt{1-C_{01}^2}}{2}\right),
\end{equation}
where
\[
h(t):=-t\log_2 t-(1-t)\log_2(1-t).
\]

\smallskip
\noindent{\rm (ii)} The Ising-edge marginal and the sibling marginal are
diagonal:
\[
\widehat\rho_{02}
=
\frac14
\begin{pmatrix}
1+q & 0 & 0 & 0\\
0 & 1-q & 0 & 0\\
0 & 0 & 1-q & 0\\
0 & 0 & 0 & 1+q
\end{pmatrix},
\]
\[
\widehat\rho_{12}
=
\frac14
\begin{pmatrix}
1+s & 0 & 0 & 0\\
0 & 1-s & 0 & 0\\
0 & 0 & 1-s & 0\\
0 & 0 & 0 & 1+s
\end{pmatrix}.
\]
Hence
\begin{equation}\label{eq:zero-entanglement-other-pairs}
C_{02}=C_{12}=0,
\qquad
\mathcal N_{02}=\mathcal N_{12}=0,
\qquad
E_F(0,2)=E_F(1,2)=0.
\end{equation}

\smallskip
\noindent{\rm (iii)} For every \(n\ge 1\), the corresponding pairwise
entanglement measures computed from the finite-volume state \(\varphi_*^{(n)}\)
coincide with the values in \eqref{eq:C01}--\eqref{eq:zero-entanglement-other-pairs}.
In particular, the local pairwise entanglement profile is independent of the
tree depth.
\end{theorem}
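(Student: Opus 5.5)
The plan is to read everything off the explicit two-site marginals \eqref{eq:two-site-densities}. These come from Proposition \ref{prop:three-site-density} by taking partial traces. Each pair term whose factor on the traced-out site is a Pauli matrix disappears, because $\Tr(\sigma_\alpha)=0$ and $\Tr(\id)=1$.

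\textbf{Step 1: matrix forms.} In the computational basis we have
\[
\sigma_z\otimes\sigma_z=\mathrm{diag}(1,-1,-1,1),
\qquad
\sigma_x\otimes\sigma_x+\sigma_y\otimes\sigma_y=2\bigl(|01\rangle\langle10|+|10\rangle\langle01|\bigr).
\]
Inserting these into \eqref{eq:two-site-densities} and multiplying by $\tfrac14$ gives the three displayed matrices in (i) and (ii). Thus $\widehat\rho_{01}$ is an X-shaped state with
\[
\rho_{11}=\rho_{44}=\tfrac{1+r}{4},\qquad
\rho_{22}=\rho_{33}=\tfrac{1-r}{4},\qquad
\rho_{23}=\rho_{32}=\tfrac{p}{2},\qquad
\rho_{14}=0 .
\]
Positivity of these matrices, namely $|q|,|s|\le1$ and $1+r\ge0$, $1-r\ge 2|p|$, is automatic, since they are marginals of the state $\varphi_*$. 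It can also be checked directly from \eqref{eq:pqrs-ent}, using $|p|\le\tfrac12$ and $|r|<1$.

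\textbf{Step 2: the $XY$ edge.} For the concurrence I will use the standard closed formula for X-states, which follows from Wootters' formula:
\[
C=2\max\bigl\{0,\ |\rho_{23}|-\sqrt{\rho_{11}\rho_{44}},\ |\rho_{14}|-\sqrt{\rho_{22}\rho_{33}}\bigr\}.
\]
Substituting the entries, the second term is $\tfrac{|p|}{2}-\tfrac{1+r}{4}$ and the third is $-\tfrac{1-r}{4}\le 0$. This gives \eqref{eq:C01}.

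For the negativity, the partial transpose on the second factor moves the coherence $2p/4$ from the $\{|01\rangle,|10\rangle\}$ block into the $\{|00\rangle,|11\rangle\}$ block. Hence its eigenvalues are $\tfrac{1-r}{4}$ (twice) and $\tfrac{1+r\pm2p}{4}$. The only eigenvalue that can be negative is $\tfrac{1+r-2|p|}{4}$. Taking the negativity as the sum of the absolute values of the negative eigenvalues yields \eqref{eq:N01}. Formula \eqref{eq:EF01} is then Wootters' relation between entanglement of formation and concurrence.

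\textbf{Step 3: the diagonal pairs.} The matrices $\widehat\rho_{02}$ and $\widehat\rho_{12}$ are diagonal with nonnegative entries. Each is therefore a convex combination of product projectors $|ij\rangle\langle ij|$, hence separable, so all three measures vanish. As a direct check: the concurrence formula gives $0$ because all coherences are zero, and the partial transpose leaves a diagonal matrix unchanged, so its spectrum stays nonnegative.

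\textbf{Step 4: depth independence.} Part (iii) follows from Proposition \ref{prop:three-site-density}, which shows $\rho_{\Lambda_1}^{(n)}=\rho_{\Lambda_1}$ for all $n\ge1$. The two-site marginals, and hence every measure computed from them, are therefore the same for each $\varphi_*^{(n)}$.

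The only delicate points are keeping the $\tfrac14$ normalization consistent between $\rho_{ij}$ and $\widehat\rho_{ij}$, and identifying which block the coherence moves into under partial transposition. Neither is a real obstacle.
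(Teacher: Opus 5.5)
Your proposal is correct and follows the paper's proof essentially step by step: you read the matrices off \eqref{eq:two-site-densities}, apply the Wootters X-state formula and the partial-transpose spectrum on the $XY$ edge, use separability of diagonal states for the other two pairs, and invoke compatibility for depth independence. The one small slip is in your optional direct positivity check: $1-r\ge 2|p|$ needs $r\le 0$, which is evident from $r=-S^2/C^4$, and does not follow from $|r|<1$ alone.
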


\begin{proof}
The matrices for \(\widehat\rho_{01}\), \(\widehat\rho_{02}\), and
\(\widehat\rho_{12}\) follow immediately from \eqref{eq:two-site-densities}.

For \(\widehat\rho_{01}\), the Wootters formula for two-qubit \(X\)-states gives
\[
C_{01}
=
2\max\left\{
0,\,
\left|\frac{p}{2}\right|-
\sqrt{\frac{1+r}{4}\frac{1+r}{4}}
\right\},
\]
which is exactly \eqref{eq:C01}. To compute the negativity, note that the
partial transpose of \(\widehat\rho_{01}\) with respect to the second tensor
factor is
\[
\widehat\rho_{01}^{\,\Gamma_2}
=
\frac14
\begin{pmatrix}
1+r & 0 & 0 & 2p\\
0 & 1-r & 0 & 0\\
0 & 0 & 1-r & 0\\
2p & 0 & 0 & 1+r
\end{pmatrix}.
\]
Its eigenvalues are
\[
\frac{1-r}{4},\qquad \frac{1-r}{4},\qquad
\frac{1+r+2p}{4},\qquad \frac{1+r-2p}{4}.
\]
Therefore
\[
\mathcal N_{01}
=
\max\left\{0,\,-\frac{1+r-2|p|}{4}\right\},
\]
which is \eqref{eq:N01}. Formula \eqref{eq:EF01} is then the usual
Wootters relation between concurrence and entanglement of formation.

The matrices \(\widehat\rho_{02}\) and \(\widehat\rho_{12}\) are diagonal in the
computational basis, hence they are convex combinations of product basis
projections. Therefore they are separable, which proves
\eqref{eq:zero-entanglement-other-pairs}.

Finally, the depth-independence follows from the projective consistency of the
finite-volume family \((\varphi_*^{(n)})_{n\ge 1}\): once a local observable is
supported inside \(\Lambda_1\), its expectation is the same for every
\(\varphi_*^{(n)}\) with \(n\ge 1\).
\end{proof}

\begin{remark}\label{rem:parameter-dependence-ent}
The explicit formulas above show the following.

\smallskip
\noindent{\rm (a)} The only potentially entangled pair is the \(XY\)-edge pair
\((0,1)\). In particular, in the unique translation-invariant state the Ising
edge \((0,2)\) and the sibling pair \((1,2)\) carry only classical two-site
correlations.

\smallskip
\noindent{\rm (b)} For fixed \(J_{XY}\), the quantity \(r\) is independent of
\(J_I\), whereas
\[
|p|
=
\frac{|\sinh(\beta J_{XY})|}
     {\cosh^2(\beta J_{XY})\,\cosh^2(\beta J_I)}
\]
is strictly decreasing as a function of \(|J_I|\). Hence
\(C_{01}\), \(\mathcal N_{01}\), and \(E_F(0,1)\) are non-increasing in
\(|J_I|\).

\smallskip
\noindent{\rm (c)} For fixed \(J_I\), one has \(C_{01}=0\) at \(J_{XY}=0\), and
also \(C_{01}\to 0\) as \(|J_{XY}|\to\infty\). Therefore, whenever entanglement
is present, it can occur only in an intermediate \(XY\)-coupling regime.

\smallskip
\noindent{\rm (d)} The pairwise entanglement measures are even functions of
\(J_I\), and \(C_{01}\), \(\mathcal N_{01}\), \(E_F(0,1)\) are also even in
\(J_{XY}\).

\smallskip
\noindent{\rm (e)} Unlike the renormalization-group iteration used in
one-dimensional spin chains, the compatible translation-invariant tree state
does not exhibit a nontrivial flow of local pairwise entanglement with the
depth of the tree. To obtain a genuine depth-dependent entanglement profile,
one would have to leave the translation-invariant fixed-point boundary
condition and study non-compatible finite-volume boundary data or
non-translation-invariant solutions.
\end{remark}

\section{Discussion and possible applications}\label{subsec:discussion-applications}

The uniqueness result for translation-invariant boundary laws suggests that the
present mixed Ising--$XY$ model should be viewed primarily as an analytically
tractable benchmark on a branching geometry, rather than as a model of phase
coexistence in the translation-invariant regime. From this perspective, several
natural directions of application arise. The remarks below should therefore be
understood as possible motivations and future directions, rather than as claims
established in the present work.

\noindent\textbf{Tree tensor networks and Bethe-lattice numerics.}
The same recursive structure makes the model relevant for tree tensor network
methods on Bethe-type lattices. Infinite tree tensor network algorithms exploit
precisely the absence of loops and the existence of local fixed-point equations,
and have been successfully applied to quantum lattice models on the Bethe
lattice \cite{LiVonDelftXiang2012}. Since the present mixed model admits an
explicit boundary-law analysis, it can be used as a calibration problem for
finite-bond-dimension effects, contraction errors, and the convergence of
iterative fixed-point solvers in tensor-network implementations.

\medskip
\noindent\textbf{Branching quantum-information transport.}
The coexistence of coherent exchange-type terms and diagonal Ising-type terms
also makes the model a simple toy setting for quantum information transport on
branching networks. Recent work on noisy quantum trees studies whether
information injected at a root can remain recoverable at arbitrarily large depth
\cite{YadavalliMarvian2025}, while binary-tree spin networks with $XY$ couplings
have been proposed as channels for high-fidelity state transfer
\cite{TufarelliGiovannetti2009}. From this viewpoint, the present static model
may be regarded as an analytically tractable counterpart of branching transfer
protocols, in which one can explicitly monitor the competition between
transport-type and conditioning-type interactions along the tree.

\medskip
\noindent\textbf{Graph-state and measurement-based directions.}
The Ising component of the interaction also connects the model to graph-state
and cluster-state constructions. Ising-type interactions are a standard
mechanism for generating highly entangled cluster-like resources
\cite{BriegelRaussendorf2001}, and such states form the basic resource of
one-way or measurement-based quantum computation \cite{RaussendorfBriegel2001}.
Although no state-preparation or computational protocol is analyzed in the
present paper, the mixed Ising--$XY$ structure suggests a hybrid viewpoint in
which some edges primarily generate entanglement while others mediate transport
or redistribution of local quantum information.

\medskip
\noindent\textbf{Other directions.}
More broadly, tree tensor network architectures are now also being used in the
simulation of structured open quantum dynamics
\cite{LiRenYangWangShuai2024,ChenFranco2025}. For this reason, variants of the
present model may also be useful as simplified branching environments, or as
equilibrium test problems for tree-based non-Markovian simulation schemes. We
leave these directions for future work. Besides, the proposed model allows to investigate associated open quantum random walks on tress \cite{ASR25,DKY19,DM19,S23,SB25}.

\section{Conclusion}

In this paper we studied a mixed quantum Ising--$XY$ model on the semi-infinite Cayley tree of order two, in which the interaction is assigned asymmetrically at each branching vertex: the edge joining a vertex \(u\) to its first child \((u,1)\) carries an \(XY\) interaction, while the edge joining \(u\) to its second child \((u,2)\) carries an Ising interaction. Working within the framework of backward quantum Markov chains and consistently using the normalized trace convention, we derived the corresponding local transfer operator explicitly and obtained the associated boundary equations.

Our first main result is the complete characterization of positive translation-invariant boundary conditions. We proved that, for all values of the parameters \(J_I,J_{XY}\in\mathbb R\) and all \(\beta>0\), the translation-invariant boundary equation admits a unique positive solution. Consequently, the model generates a unique translation-invariant quantum Markov chain associated with a positive translation-invariant boundary law.

We then analyzed the reduced boundary-law dynamics induced by the recursion. In this setting we showed that there are no admissible periodic points of minimal period greater than one. Thus, within the physically relevant positive regime, the dynamics does not produce nontrivial periodic boundary behaviors. This excludes a natural source of competing regular boundary structures and strengthens the uniqueness picture obtained in the translation-invariant sector.

A further principal contribution of the paper is the rigidity analysis of the full two-child boundary equation. We proved that every positive solution of the full boundary equation is necessarily scalar. Hence, although there exist infinitely many positive scalar boundary laws, they do not generate genuinely different quantum Markov chains; rather, they correspond to the same state as the unique positive translation-invariant solution. In this sense, the model exhibits a strong rigidity phenomenon at the level of positive QMCs.

In addition to the boundary-law analysis, we computed the local three-site density matrix on the natural cluster consisting of a parent and its two children, and we analyzed the induced two-site reduced states. This allowed us to identify explicitly which pairs can exhibit quantum correlations and which pairs remain separable. Therefore, the paper not only establishes uniqueness and rigidity of the state-selection mechanism, but also gives a concrete local quantum-information interpretation of the resulting state.

The present model should be viewed as a locally oriented mixed Ising-$XY$ model on a tree, and this distinguishes it from level-alternating mixed models considered in the literature. In particular, the asymmetry is built into the branching geometry itself rather than into the parity of generations. This geometric feature leads to a different recursion and, in the present setting, to a stronger rigidity conclusion for positive boundary conditions.

Several natural directions remain open. One may study higher-order Cayley trees, more general anisotropic assignments of interactions along the outgoing edges, perturbations by external fields, or versions with disorder and random couplings. It would also be interesting to analyze whether analogous rigidity phenomena persist for higher-spin local algebras or for hidden and open quantum Markov chain extensions on trees. We hope that the explicit solvability of the present model makes it a useful benchmark for further developments in quantum Markov fields on non-amenable graphs.



\section*{Conflicts of Interest}
We confirm that there are no known conflicts of interest associated with this publication and there has been no significant financial support for this work that could have influenced its outcome.

\section*{Data Availability}
No data was used for the research described in the article.

\appendix

\section{Detailed calculation of the reduced map}\label{app:reducedmap}

In this appendix we derive \eqref{Phih} and \eqref{TH} in detail under the normalized trace convention $\Tr(\id)=1$. Set
\[
\Phi(\mathbf h):=\tr_{S(u)}\Big(A_{u,(u,1),(u,2)}\big[\id^{(u)}\otimes \mathbf h\otimes \mathbf h\big]A_{u,(u,1),(u,2)}^*\Big).
\]
Since $\tr_{S(u)}$ is already the normalized partial trace over the two successor sites, no additional factor appears in the reduction. Equivalently, one may write
\[
\Phi(\mathbf h)=\Tr_{(u,1),(u,2)}\Big(A_{u,(u,1),(u,2)}\big[\id^{(u)}\otimes \mathbf h\otimes \mathbf h\big]A_{u,(u,1),(u,2)}^*\Big),
\]
where $\Tr_{(u,1),(u,2)}$ denotes the normalized partial trace over the sites $(u,1)$ and $(u,2)$.

For brevity, in this appendix the tensor factors without the superscripts $(u)$, $(u,1)$, $(u,2)$ refer to the sites $(u,1)$ and $(u,2)$. Introduce the constants
\[
\alpha=\frac{C+1}{2}\kappa_0, \qquad
\beta=\frac{C+1}{2}\kappa_1, \qquad
\gamma=-\frac{C-1}{2}\kappa_1, \qquad
\delta=-\frac{C-1}{2}\kappa_0,
\]
\[
\eta_0=\frac{S\kappa_0}{2}, \qquad \eta_1=\frac{S\kappa_1}{2}.
\]
Then \eqref{AI} can be written as
\begin{equation}\label{A-decomp}
A_{u,(u,1),(u,2)}=\id^{(u)}\otimes P_0+\sigma_z^{(u)}\otimes P_z+\sigma_x^{(u)}\otimes P_x+\sigma_y^{(u)}\otimes P_y,
\end{equation}
where
\begin{equation}\label{PxPyPz}
\begin{aligned}
P_0&=\alpha\,\id\otimes\id+\gamma\,\sigma_z\otimes\sigma_z,\\
P_z&=\beta\,\id\otimes\sigma_z+\delta\,\sigma_z\otimes\id,\\
P_x&=\eta_0\,\sigma_x\otimes\id+i\eta_1\,\sigma_y\otimes\sigma_z,\\
P_y&=\eta_0\,\sigma_y\otimes\id-i\eta_1\,\sigma_x\otimes\sigma_z.
\end{aligned}
\end{equation}
Let $\sigma_0:=\id$ and put $P_x,P_y,P_z,P_0$ in the same order as in \eqref{A-decomp}. If we set
\[
\rho:=\mathbf h\otimes \mathbf h,
\]
then \eqref{A-decomp} yields
\begin{equation}\label{Phi-sum}
\Phi(\mathbf h)=\sum_{\mu,\nu\in\{0,x,y,z\}} t_{\mu\nu}\,\sigma_\mu^{(u)}\sigma_\nu^{(u)},
\qquad
 t_{\mu\nu}:=\Tr\big(P_\mu\rho P_\nu^*\big).
\end{equation}
Using the Pauli multiplication table,
\[
\sigma_x\sigma_y=i\sigma_z,
\quad
\sigma_y\sigma_z=i\sigma_x,
\quad
\sigma_z\sigma_x=i\sigma_y,
\]
and the corresponding reversed relations,
\begin{equation}\label{Phi-components}
\begin{aligned}
\Phi(\mathbf h)
&=\big(t_{00}+t_{xx}+t_{yy}+t_{zz}\big)\id \\
&\quad +\big(t_{0x}+t_{x0}+it_{yz}-it_{zy}\big)\sigma_x \\
&\quad +\big(t_{0y}+t_{y0}+it_{zx}-it_{xz}\big)\sigma_y \\
&\quad +\big(t_{0z}+t_{z0}+it_{xy}-it_{yx}\big)\sigma_z.
\end{aligned}
\end{equation}

Recall that
\[
\mathbf h=X\id+x\sigma_x+y\sigma_y+z\sigma_z.
\]
Then
\[
\Tr(\mathbf h)=X,
\qquad
\Tr(\mathbf h\sigma_x)=x,
\qquad
\Tr(\mathbf h\sigma_y)=y,
\qquad
\Tr(\mathbf h\sigma_z)=z.
\]
Moreover, for $i,j\in\{x,y,z\}$ one has
\begin{equation}\label{trace-sigma-h-sigma}
\Tr(\sigma_i\mathbf h\sigma_j)=X\delta_{ij}-i\sum_{k\in\{x,y,z\}}\varepsilon_{ijk}\,r_k,
\end{equation}
where $r_x=x$, $r_y=y$, $r_z=z$, and $\varepsilon_{ijk}$ is the Levi-Civita symbol. In particular,
\begin{equation}\label{explicit-traces}
\begin{gathered}
\Tr(\sigma_x\mathbf h\sigma_x)=\Tr(\sigma_y\mathbf h\sigma_y)=\Tr(\sigma_z\mathbf h\sigma_z)=X,\\
\Tr(\sigma_x\mathbf h\sigma_y)=-iz, \qquad \Tr(\sigma_y\mathbf h\sigma_x)=iz,\\
\Tr(\sigma_y\mathbf h\sigma_z)=-ix, \qquad \Tr(\sigma_z\mathbf h\sigma_y)=ix,\\
\Tr(\sigma_z\mathbf h\sigma_x)=-iy, \qquad \Tr(\sigma_x\mathbf h\sigma_z)=iy.
\end{gathered}
\end{equation}
For tensor-product operators the trace factorizes:
\begin{equation}\label{trace-factorization}
\Tr\Big((B_1\otimes B_2)(\mathbf h\otimes \mathbf h)(C_1\otimes C_2)\Big)
=\Tr(B_1\mathbf h C_1)\,\Tr(B_2\mathbf h C_2).
\end{equation}
We now use \eqref{explicit-traces} and \eqref{trace-factorization} to compute each coefficient in \eqref{Phi-components}.

From \eqref{PxPyPz} one obtains
\begin{align}
 t_{00}&=(\alpha^2+\gamma^2)X^2+2\alpha\gamma z^2, \label{t00}\\
 t_{zz}&=(\beta^2+\delta^2)X^2+2\beta\delta z^2, \label{tzz}\\
 t_{xx}&=(\eta_0^2+\eta_1^2)X^2-2\eta_0\eta_1 z^2, \label{txx}\\
 t_{yy}&=(\eta_0^2+\eta_1^2)X^2-2\eta_0\eta_1 z^2. \label{tyy}
\end{align}
Hence
\begin{equation}\label{id-before-simplify}
t_{00}+t_{xx}+t_{yy}+t_{zz}
=AX^2+Bz^2,
\end{equation}
where
\[
A=\alpha^2+\beta^2+\gamma^2+\delta^2+2(\eta_0^2+\eta_1^2),
\qquad
B=2\alpha\gamma+2\beta\delta-4\eta_0\eta_1.
\]
Substituting the definitions of $\alpha,\beta,\gamma,\delta,\eta_0,\eta_1$, we find
\[
A=\frac{(C+1)^2+(C-1)^2+2S^2}{4}(\kappa_0^2+\kappa_1^2)
=C^2(\kappa_0^2+\kappa_1^2),
\]
where we used $C^2-S^2=1$, and
\[
B=-\frac{(C^2-1)}{2}\kappa_0\kappa_1-\frac{(C^2-1)}{2}\kappa_0\kappa_1-S^2\kappa_0\kappa_1
=-2S^2\kappa_0\kappa_1.
\]
Therefore
\begin{equation}\label{id-final}
t_{00}+t_{xx}+t_{yy}+t_{zz}
=C^2(\kappa_0^2+\kappa_1^2)X^2-2S^2\kappa_0\kappa_1z^2
=aX^2-bz^2.
\end{equation}

Again by \eqref{PxPyPz}, \eqref{explicit-traces}, and \eqref{trace-factorization},
\begin{align}
 t_{0x}&=(\alpha\eta_0+\gamma\eta_1)Xx-i(\alpha\eta_1+\gamma\eta_0)yz, \label{t0x}\\
 t_{x0}&=(\alpha\eta_0+\gamma\eta_1)Xx+i(\alpha\eta_1+\gamma\eta_0)yz, \label{tx0}\\
 t_{yz}&=-i(\delta\eta_0+\beta\eta_1)Xx+(\beta\eta_0+\delta\eta_1)yz, \label{tyz}\\
 t_{zy}&= i(\delta\eta_0+\beta\eta_1)Xx+(\beta\eta_0+\delta\eta_1)yz. \label{tzy}
\end{align}
The mixed terms $yz$ cancel in the combination from \eqref{Phi-components}, and we obtain
\begin{align*}
t_{0x}+t_{x0}+it_{yz}-it_{zy}
&=2\Big((\alpha\eta_0+\gamma\eta_1)+(\delta\eta_0+\beta\eta_1)\Big)Xx\\
&=2\Big(\eta_0(\alpha+\delta)+\eta_1(\beta+\gamma)\Big)Xx.
\end{align*}
Since
\[
\alpha+\delta=\kappa_0,
\qquad
\beta+\gamma=\kappa_1,
\]
it follows that
\begin{equation}\label{x-final}
t_{0x}+t_{x0}+it_{yz}-it_{zy}
=2\left(\eta_0\kappa_0+\eta_1\kappa_1\right)Xx
=S(\kappa_0^2+\kappa_1^2)Xx
=cXx.
\end{equation}

Similarly,
\begin{align}
 t_{0y}&=(\alpha\eta_0+\gamma\eta_1)Xy+i(\alpha\eta_1+\gamma\eta_0)xz, \label{t0y}\\
 t_{y0}&=(\alpha\eta_0+\gamma\eta_1)Xy-i(\alpha\eta_1+\gamma\eta_0)xz, \label{ty0}\\
 t_{zx}&= i(\delta\eta_0+\beta\eta_1)Xy+(\beta\eta_0+\delta\eta_1)xz, \label{tzx}\\
 t_{xz}&=-i(\delta\eta_0+\beta\eta_1)Xy+(\beta\eta_0+\delta\eta_1)xz. \label{txz}
\end{align}
Therefore the mixed terms $xz$ cancel, and
\begin{equation}\label{y-final}
t_{0y}+t_{y0}+it_{zx}-it_{xz}
=S(\kappa_0^2+\kappa_1^2)Xy
=cXy.
\end{equation}

Finally,
\begin{align}
 t_{0z}&=(\alpha+\gamma)(\beta+\delta)Xz, \label{t0z}\\
 t_{z0}&=(\alpha+\gamma)(\beta+\delta)Xz, \label{tz0}\\
 t_{xy}&=-i(\eta_0-\eta_1)^2Xz, \label{txy}\\
 t_{yx}&= i(\eta_0-\eta_1)^2Xz. \label{tyx}
\end{align}
Hence
\begin{align*}
t_{0z}+t_{z0}+it_{xy}-it_{yx}
&=2\Big((\alpha+\gamma)(\beta+\delta)+(\eta_0-\eta_1)^2\Big)Xz.
\end{align*}
To simplify the scalar factor, note that
\begin{align*}
(\alpha+\gamma)(\beta+\delta)
&=\frac14\Big((\kappa_0+\kappa_1)^2-C^2(\kappa_0-\kappa_1)^2\Big),\\
(\eta_0-\eta_1)^2
&=\frac{S^2}{4}(\kappa_0-\kappa_1)^2.
\end{align*}
Therefore,
\begin{align*}
(\alpha+\gamma)(\beta+\delta)+(\eta_0-\eta_1)^2
&=\frac14\Big((\kappa_0+\kappa_1)^2-(C^2-S^2)(\kappa_0-\kappa_1)^2\Big)\\
&=\frac14\Big((\kappa_0+\kappa_1)^2-(\kappa_0-\kappa_1)^2\Big)\\
&=\kappa_0\kappa_1=\frac d2.
\end{align*}
Consequently,
\begin{equation}\label{z-final}
t_{0z}+t_{z0}+it_{xy}-it_{yx}=dXz.
\end{equation}

Combining \eqref{id-final}, \eqref{x-final}, \eqref{y-final}, and \eqref{z-final} in \eqref{Phi-components}, we obtain
\[
\Phi(\mathbf h)
=(aX^2-bz^2)\id+cXx\,\sigma_x+cXy\,\sigma_y+dXz\,\sigma_z,
\]
which is exactly \eqref{Phih}. Since
\[
\mathbf h=X\id+x\sigma_x+y\sigma_y+z\sigma_z,
\]
comparing the coefficients of $\id,\sigma_x,\sigma_y,\sigma_z$ yields the system
\[
X=aX^2-bz^2,
\qquad
x=cXx,
\qquad
y=cXy,
\qquad
z=dXz,
\]
that is, \eqref{TH}.

\end{document}